\newtheorem{theorem}{Theorem}[section]
\newtheorem{conjecture}[theorem]{Conjecture}
\newtheorem{corollary}[theorem]{Corollary}
\newtheorem{lemma}[theorem]{Lemma}
\newtheorem{proposition}[theorem]{Proposition}
\newtheorem{defn}[theorem]{Definition}
\newtheorem{observation}[theorem]{Observation}
\theoremstyle{definition}
\newtheorem{rem}[theorem]{Remark}
\newcommand\DELETE[1]{}
\pgfplotsset{compat = newest}
\begin{document}


\title{{\bf Counting the minimum number of arcs in an oriented graph having weak diameter $2$}}
\author{
{\sc Sandip Das}$\,^{a}$, {\sc Koushik Kumar Dey}$\,^{a}$, {\sc Pavan P D}$\,^{b}$, {\sc Sagnik Sen}$\,^{b}$\\
\mbox{}\\
{\small $(a)$ Indian Statistical Institute, Kolkata, India}\\
{\small $(b)$ Indian Institute of Technology Dharwad, India}
}

\date{}

\maketitle

\begin{abstract}
An oriented graph has weak diameter at most $d$ if every non-adjacent pair of vertices are connected by a directed $d$-path. The function $f_d(n)$ denotes the minimum number of arcs in an oriented graph on $n$ vertices having weak diameter $d$. Finding the exact value of $f_d(n)$ is a challenging problem even for $d = 2$. This function was introduced by Katona and Szeme{\'r}edi (1967), and after that several attempts were made to find its exact value by Znam (1970), Dawes and Meijer (1987), F{\"u}redi, Horak, Pareek and Zhu (1998), and Kostochka, Luczak, Simonyi and Sopena (1999) through improving its best known bounds. In that process, it was proved that this function is asymptotically equal to $n\log_2 n$ and hence, is an asymptotically increasing function. However, the exact value and behaviour of this function was not known. 

In this article, we observe that the oriented graphs with weak diameter at most $2$ are precisely the absolute oriented cliques, that is, analogues of cliques for oriented graphs in the context of oriented coloring. Through studying arc-minimal absolute oriented cliques we prove that $f_2(n)$ is a strictly increasing function. Furthermore, we improve the best known upper bound of $f_2(n)$ and conjecture that our upper bound is tight. This improvement of the upper bound improves known bounds involving the oriented achromatic number.
\end{abstract}

\noindent \textbf{Keywords:} minimum arc counting function, weak diameter, oriented coloring, absolute oriented cliques.

\section{Introduction}
\label{sec:sec1}
In relation to the famous degree-diameter problem, Erd\H{o}s, R\'{e}nyi and S\'{o}s~\cite{erdos1966problem} defined the function $$h_d(n, k)=\min\{|E(G)| : \text{diam}(G) \leq d, \Delta(G) \leq k \text{ and }|V(G)|=n\}$$ as the minimum number of edges among graphs having diameter at most $d$ and maximum degree at most $k$. Its directed (oriented) analogue turns out to be very interesting even 
for small values of $d$, without even restricting the maximum degree. However, we need to recall a few notions before going into that.

\subsection{The function $f_d(n)$}
An \textit{oriented graph} $\overrightarrow{G}$ is a directed graph without any directed cycle of size $1$ or $2$. The sets of the vertices and arcs of $\overrightarrow{G}$ are denoted by $V(\overrightarrow{G})$ and $A(\overrightarrow{G})$, respectively.  Moreover, the \textit{weak diameter} of $\overrightarrow{G}$,  denoted by $\text{diam}_w(\overrightarrow{G})$, is the minimum $d$ such that given any two distinct vertices $u$ and $v$ of $\overrightarrow{G}$, there exists a directed path of length at most $d$ from $u$ to $v$ or from $v$ to $u$. Therefore, an analogue (ignoring the restriction on the maximum degree)  of $h_d$, in this case, is given by $$f_d(n)=\min\{|A(\overrightarrow{G})| : \text{diam}_w(\overrightarrow{G}) \leq d \text{ and }|V(\overrightarrow{G})|=n\}.$$

Finding the exact value of $f_d(n)$, even when $d=2$, turns out to be a challenging problem. On the other hand, the oriented graphs having weak diameter at most $2$ are objects of interest in the later discovered theory of oriented homomorphisms and colorings~\cite{sopena1997chromatic}.

\subsection{Relation with oriented coloring}
The notions of oriented homomorphisms, colorings, and chromatic number were introduced by Courcelle~\cite{courcelle1994monadic} in the series of papers that led to the illustrious Courcelle's theorem. A vertex mapping $\phi: V(\overrightarrow{G}) \to V(\overrightarrow{H})$ is a \textit{homomorphism} of $\overrightarrow{G}$ to $\overrightarrow{H}$ if for every arc $\overrightarrow{uv}$ of $\overrightarrow{G}$, $\overrightarrow{\phi(u)\phi(v)}$ is also an arc of $\overrightarrow{H}$. The \textit{oriented chromatic number} of $\overrightarrow{G}$, denoted by $\chi_o(\overrightarrow{G})$, is the minimum $|V(\overrightarrow{H})|$ such that $\overrightarrow{G}$ admits a homomorphism to $\overrightarrow{H}$. Observe that, an analogous definition of graph homomomorphism for undirected graphs yields a similar equivalent definition of the chromatic number~\cite{hell2004graphs}. Thus, oriented coloring is indeed a true generalization of the ordinary coloring. 

In a quest to find an analogue of a clique for oriented graphs with respect to oriented colorings, Klostermeyer and MacGillivray~\cite{klostermeyer2004analogues} 
defined an \textit{absolute oriented clique} $\overrightarrow{C}$ as an oriented graph satisfying $\chi_o(\overrightarrow{C}) = |V(\overrightarrow{C})|$. They characterized the absolute oriented cliques as follows. 
\begin{theorem}[Klostermeyer and MacGillivray~\cite{klostermeyer2004analogues}]
\label{thm:occhar}
    An oriented graph $\overrightarrow{C}$ is an absolute oriented clique if and only if $\overrightarrow{C}$ is an oriented graph having weak diameter at most $2$.
\end{theorem}
Hence, the objects of interest in studying the function $f_2(n)$ are absolute oriented cliques.

Incidentally, it turns out that the absolute oriented cliques are not so rare, as pointed out by Bensmail, Duffy and Sen~\cite{bensmail2017analogues}, who showed that almost all oriented graphs are absolute oriented cliques. Yet they are difficult to detect as Kirgizov, Duvignau and Bensmail~\cite{kirgizov2016complexity} showed that it is NP-hard to determine whether a given simple graph can be oriented as an absolute oriented clique or not. One notable structural result in this topic, due to Nandi, Sen and Sopena~\cite{nandy2016outerplanar}, settled a conjecture by Klostermeyer and MacGillivray~\cite{klostermeyer2004analogues} by showing that a planar absolute oriented clique can have at most $15$ vertices.

\begin{rem}
    The objects known as absolute oriented cliques, and their generalization - relative oriented cliques, appear in several contexts in the study of homomorphisms and colorings of oriented graphs. Apart from the few dedicated studies of oriented cliques~\cite{das2015oriented,das2018study,klostermeyer2004analogues,nandy2016outerplanar}, they are also frequently used to prove lower bounds of oriented chromatic number of graph families~\cite{duffy2019oriented,dybizbanski2020oriented,sopena1997chromatic}. The study of deeply critical oriented graphs uses absolute oriented cliques as the key constructions of the main results~\cite{BORODIN2001150,duffy2023}. Furthermore, the study of oriented complete coloring and achromatic number~\cite{PD202448,sopena2014complete} heavily relies on the understanding of the structures of absolute oriented cliques. In fact, the knowledge of the function $f_2(n)$ can greatly impact its research. 
\end{rem}

\subsection{Bounds of $f_2(n)$}

The problem of determining the function $f_2(n)$ was originally posed by Erd\H{o}s, R\'{e}nyi and S\'{o}s~\cite{erdos1966problem} in 1966 and later by Znam~\cite{znam1970minimal} and Dawes and Meijer~\cite{dawes1987arc}. For undirected graphs, the answer to the analogous question, that is, determining the exact value of $h_2(n, n - 1)$, is trivial: a graph with diameter at most 2 has at least $n - 1$ edges and the star graph attains the bound, implying that $h_2(n, n - 1) = n - 1$. For oriented graphs, Katona and Szemer\'{e}di~\cite{katona1967problem} showed the following.

\begin{theorem}[Katona and Szemer\'{e}di~\cite{katona1967problem}]
\label{thm:intro1}
    \[\dfrac{n}{2}\log_2 \dfrac{n}{2} \leq f_2(n) \leq n \lceil \log_2 n \rceil.\]
\end{theorem}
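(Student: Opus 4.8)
The two inequalities are essentially unrelated, and I would prove them separately: the upper bound by exhibiting an explicit family of oriented graphs, the lower bound by a counting argument valid for every oriented graph of weak diameter at most $2$. I expect the lower bound to be the harder half.

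For the upper bound, the plan is a ``hub plus codes'' construction. Put $k=\lceil\log_2 n\rceil$, take $k$ special vertices $c_1,\dots,c_k$, and orient the pairs among them transitively, $c_i\to c_j$ whenever $i<j$; so the hub induces a transitive tournament and every two hub vertices are adjacent. Since $2^k\ge n\ge n-k$ there are enough binary strings to give each of the remaining $n-k$ vertices a distinct code $\beta=(\beta_1,\dots,\beta_k)\in\{0,1\}^k$, and for such a vertex $x$ we insert the arc $x\to c_i$ if $\beta_i=1$ and the arc $c_i\to x$ if $\beta_i=0$. One checks the weak diameter is at most $2$: two hub vertices are adjacent; a hub vertex and a non-hub vertex are adjacent by construction; and two non-hub vertices $x,y$ whose codes first disagree in some coordinate $i$, say $\beta_i(x)=1$ and $\beta_i(y)=0$, are joined by the directed $2$-path $x\to c_i\to y$. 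The number of arcs is $\binom{k}{2}+k(n-k)=kn-\binom{k+1}{2}\le kn=n\lceil\log_2 n\rceil$, which is the claimed bound.

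For the lower bound, let $\overrightarrow{G}$ be an oriented graph on $n$ vertices with $diam_w(\overrightarrow{G})\le 2$ and $m$ arcs, and write $N^+(v),N^-(v)$ for the out- and in-neighbourhoods of a vertex $v$. The starting point is that a shortest directed path realising the weak distance from $v$ to another vertex is either an arc incident to $v$, a directed $2$-path $v\to w\to u$, or a directed $2$-path $u\to w\to v$; hence for every $v$,
\[
\bigl|N^+(v)\bigr|+\bigl|N^-(v)\bigr|+\sum_{w\in N^+(v)}\bigl|N^+(w)\bigr|+\sum_{w\in N^-(v)}\bigl|N^-(w)\bigr|\;\ge\;n-1 .
\]
Summing over all $v$ gives only $2m+2\sum_w|N^+(w)|\,|N^-(w)|\ge n(n-1)$, which, since the left-hand sum can be carried by a few vertices of degree $\approx n$, merely forces $m=\Omega(n)$. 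The extremal graphs (the hub graph above is one) are deliberately far from regular --- a handful of vertices of degree $\approx n$ and the rest of degree $\approx\log_2 n$ --- and the content of the theorem is that this irregularity still costs a logarithmic factor. So the plan is to recover the $\log$ by induction on $n$: to find in every such $\overrightarrow{G}$ a vertex that can be removed while keeping weak diameter $2$ and to which one may honestly charge on the order of $\log_2 n$ arcs, and then iterate, obtaining $f_2(n)\ge f_2(n')+\Theta(n-n')$ with constants that close to $\tfrac{n}{2}\log_2\tfrac{n}{2}$.

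The main obstacle is that this naive induction is false as stated: a directed $5$-cycle has weak diameter $2$, yet deleting any one of its vertices destroys that property, and in general a deletable vertex may have degree far below $\log_2 n$, so it cannot be charged its own incident arcs. The argument therefore has to run with a weaker invariant --- for instance, tracking for each vertex how many pairs it is still the unique $2$-path interior for, deleting a vertex of small such ``load'', and charging arcs globally so that the charges made at successive steps stay disjoint --- and this is where the directed $2$-path structure, as opposed to mere diameter $2$ of the underlying undirected graph, must enter in an essential way. Designing the invariant so that a usable vertex always exists and the accounting never double counts is the step I expect to demand the most care.
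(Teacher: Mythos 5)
Your upper bound argument is correct: a transitive ``hub'' tournament on $k=\lceil\log_2 n\rceil$ vertices together with $n-k$ coded vertices, each tied to the hub according to a distinct binary string, gives a simple oriented graph (no $2$-cycles arise, since each hub--nonhub pair carries exactly one arc) in which two coded vertices see each other through the hub vertex indexed by their first disagreeing coordinate, while every pair involving a hub vertex is adjacent outright. The arc count $\binom{k}{2}+k(n-k)\le kn$ is right, so $f_2(n)\le n\lceil\log_2 n\rceil$. (For comparison, the paper does not prove Theorem~\ref{thm:intro1}; it is cited from Katona and Szemer\'edi, so there is no in-paper argument to set yours against.)

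The lower bound, however, is not proved, and you acknowledge as much. The summed inequality $2m+2\sum_w d^+(w)d^-(w)\ge n(n-1)$ is valid, but the cross-degree term can absorb almost all of $n(n-1)$ when a few vertices have degree near $n$ (as in the extremal hub construction itself), so the most one extracts is $m\ge n/2$, far from $\tfrac{n}{2}\log_2\tfrac{n}{2}$. Your proposed repair --- an induction that deletes a vertex of small ``load'' and charges it about $\log_2 n$ arcs --- is a plan, not an argument: the invariant is never defined, the existence of a deletable vertex with small load is never established, and the charging scheme is never shown to avoid double counting; your own $C_5$ example already refutes the naive version you start from. Since the theorem is a two-sided bound, half of it remains genuinely open; as written, the proposal establishes only $f_2(n)\le n\lceil\log_2 n\rceil$.
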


These bounds were further improved independently by the works of F\"{u}redi, Horak, Pareek and Zhu~\cite{furedi1998minimal}, and Kostochka, Luczak, Simonyi and Sopena\cite{kostochka1999minimum} which proved the following results.

\begin{theorem}[F\"{u}redi, Horak, Pareek and Zhu~\cite{furedi1998minimal}]
\label{thm:intro2}
    For any $n \geq 9$,
    \[(1 - o(1))n \log_2 n \leq f_2(n) \leq n\log_2 n - \dfrac{3}{2}n.\]
\end{theorem}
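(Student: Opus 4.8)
The statement has two independent halves, which I would attack quite differently.

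\emph{The upper bound $f_2(n)\le n\log_2 n-\tfrac32 n$} I would prove by an explicit recursive binary construction. Index the vertices (there are about $2^k$ with $k=\lceil\log_2 n\rceil$) by binary strings and orient the arcs by a rule tied to the leading coordinate on which two strings differ, so that between any two strings one can exhibit a directed path of length at most $2$, patching the pairs not yet covered by a controlled number of auxiliary arcs inside each recursive block. Unwinding a recursion of the shape $F(2m)\le 2F(m)+2m$ produces the main term $n\log_2 n$, while the linear saving $\tfrac32 n$ comes from (i) replacing the bottom of the recursion by ad hoc sparse absolute oriented cliques on a few vertices, where the plain recursion (or a tournament) is wasteful, and (ii) reusing arcs between the two halves that are glued together. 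For $n$ not a power of $2$ one deletes vertices from the construction on the next power of $2$, or glues unequal blocks directly, checking that weak diameter $2$ survives and the count does not grow. I expect this half to be essentially bookkeeping once the gluing rule is pinned down.

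\emph{The lower bound $(1-o(1))n\log_2 n\le f_2(n)$} rests on the following structural observation: if $\overrightarrow G$ has weak diameter at most $2$ then every non-adjacent pair $\{u,v\}$ is ``captured'' by the middle vertex $w$ of a directed $2$-path between them, meaning one of $u,v$ is an out-neighbour and the other an in-neighbour of $w$; counting captures yields
\[
\sum_{w\in V(\overrightarrow G)} d^+(w)\,d^-(w)\ \ge\ \binom n2-|A(\overrightarrow G)| ,
\]
and, localising at a vertex $v$, also $\sum_{w\in N^+(v)}d^+(w)+\sum_{w\in N^-(v)}d^-(w)\ge n-1-d(v)$. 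By themselves these inequalities force only $|A(\overrightarrow G)|=\Omega(n)$ in general, but in the nearly-regular regime the displayed inequality already gives $|A(\overrightarrow G)|=\Omega(n^{3/2})$; so all the work is in the highly irregular regime, where the entire $\log_2 n$ factor, and the exact constant, must be bootstrapped. The plan is a recursive decomposition: the $k$ highest-degree vertices (``hubs'') split the remaining vertices into at most $3^k$ classes according to their in/out/non-adjacency pattern to the hubs; two non-adjacent vertices in the same class cannot be captured by a hub, so their $2$-path runs through a non-hub, which forces the non-hub subgraph to join every non-adjacent pair inside each class by a directed $2$-path. One then iterates with $k$ tuned so that the classes shrink by a controlled factor: the hub arcs pay about $kn$ per scale over about $(\log_2 n)/k$ scales, and the sum telescopes to $(1-o(1))\,n\log_2 n$.

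The main obstacle is making this recursion essentially lossless. The middle vertex serving a within-class pair need not lie in that class, so the sub-instances produced by the decomposition are coupled; one must organise the accounting --- plausibly by choosing hubs greedily and charging each arc to the unique scale at which its two endpoints first become separated --- so that no arc is counted twice and, conversely, the within-class service demands genuinely force the claimed number of arcs at the next scale. On top of this the classes are only approximately balanced, the hub count must be balanced against the shrink factor, and the additive losses (vertices non-adjacent to hubs, the bottom of the recursion, rounding) must all be absorbed into the $o(1)$. Pushing the constant up to $1$ --- rather than the $\tfrac12$ of the Katona--Szemer\'{e}di bound of Theorem~\ref{thm:intro1} or the $1/\log_2 3$ that a crude version of the recursion yields --- is exactly where the delicate estimates live; since the upper bound matches, the target is precisely to lose nothing.
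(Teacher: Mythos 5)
This statement is not proved in the paper at all: it is Theorem~\ref{thm:intro2}, which the authors quote verbatim from F\"uredi, Horak, Pareek and Zhu~\cite{furedi1998minimal} as background. There is therefore no ``paper's own proof'' to compare your attempt against, and you should not expect the present article to supply one. The paper's own contribution (Theorems~\ref{thm:incr} and~\ref{thm:upper}) concerns a different construction and a monotonicity result, and its upper-bound machinery --- the recursive family $\overrightarrow{O}_n$ built by gluing $\overrightarrow{O}_{n_1}$ and $\overrightarrow{O}_{n_2}$ through a single dominating vertex $v$, with $n_1+n_2=n-1$ chosen to minimize $g(n_1)+g(n_2)$ --- is genuinely different from, and sharper than, the $n\log_2 n - \tfrac32 n$ bound you are aiming at.

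On the substance of your sketch: the upper-bound half is in the right spirit (a balanced binary recursion of the shape $F(n)\approx 2F(n/2)+n$ gives the main term, and better base cases and arc-sharing give the linear correction), though you have not actually exhibited a gluing rule and verified weak diameter $2$; if you want a worked-out version of exactly such a recursion, the construction in Section~\ref{sec3} of this paper is the model to imitate. The lower-bound half, however, has a real gap. Your starting inequality $\sum_w d^+(w)d^-(w) \ge \binom n2 - |A(\overrightarrow G)|$ is correct but, as you yourself compute, yields only a linear bound; the entire $\log_2 n$ factor has to come from the recursive hub decomposition, and that step is asserted rather than argued. In particular you do not show that the ``within-class'' service demands can be charged to fresh arcs without double counting, nor that the hub count $k$ can be tuned so that the sum over scales telescopes to $(1-o(1))n\log_2 n$ rather than to a smaller constant times $n\log_2 n$. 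Since Katona--Szemer\'edi already get constant $\tfrac12$ by a much cleaner argument, the whole point of the F\"uredi--Horak--Pareek--Zhu lower bound is to push that constant to $1$, and that is precisely the part your outline leaves open. As written, the lower-bound portion is a research programme, not a proof.
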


\begin{theorem}[Kostochka, Luczak, Simonyi and Sopena~\cite{kostochka1999minimum}]
\label{thm:intro3}
    For a fixed $d \geq 2$ and $n$ large enough
    \[n(\log_d n - 4\log_d \log_d n - 5) \leq f_d(n) \leq \lceil \log_d n\rceil(n - \lceil \log_d n\rceil).\]
\end{theorem}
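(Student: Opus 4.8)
The plan is to establish the upper bound by an explicit construction and the lower bound by a set-pair (Bollob\'as-type) counting argument on the neighbourhoods; I treat the two parts separately.

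\emph{Upper bound.} Take $d=2$ first and put $k=\lceil\log_2 n\rceil$. Build $\overrightarrow{G}$ on a set $\{c_1,\dots,c_k\}$ of ``coordinate'' vertices together with $n-k$ ``data'' vertices, assigning to each data vertex $x$ a distinct string $\beta(x)\in\{0,1\}^k$ (possible since $2^k\ge n$) and arranging that the $k$ standard unit strings $e_1,\dots,e_k$ all occur among them (possible since $n-k\ge k$ for large $n$). For each data vertex $x$ and each index $i$, add the arc $x\to c_i$ when $\beta(x)_i=1$ and $c_i\to x$ otherwise; this uses exactly $k(n-k)$ arcs. Then $diam_w(\overrightarrow{G})\le 2$: two distinct data vertices disagree in some coordinate $i$ and are joined by $x\to c_i\to y$ or $y\to c_i\to x$; two coordinate vertices $c_i,c_j$ are joined by $c_i\to e_j\to c_j$; and every coordinate--data pair is adjacent. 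For general $d$ I would keep the skeleton but use base-$d$ labels in $\{0,\dots,d-1\}^k$ with $k=\lceil\log_d n\rceil$, realising each digit by a suitably oriented short walk through the coordinate vertices so that a discrepancy in one coordinate still produces a directed path of length at most $d$; the only real work is the bookkeeping needed to land on exactly $\lceil\log_d n\rceil\bigl(n-\lceil\log_d n\rceil\bigr)$ arcs.

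\emph{Lower bound.} Take $d=2$ first. The structural input is that for non-adjacent $u,v$ a directed $2$-path between them forces $N^+(u)\cap N^-(v)\ne\emptyset$ or $N^+(v)\cap N^-(u)\ne\emptyset$, while $N^+(v)\cap N^-(v)=\emptyset$ always. Thus $\{(N^+(v),N^-(v))\}_v$ is a set-pair system that is cross-intersecting --- in at least one of the two directions --- across every non-edge. Fixing a uniformly random linear order $\pi$ of the vertices, the probability that every vertex of $N^+(v)$ precedes every vertex of $N^-(v)$ in $\pi$ is $\binom{\deg(v)}{d^+(v)}^{-1}$, where $\deg(v)=d^+(v)+d^-(v)$. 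A Bollob\'as-type disjointness argument --- run on a subfamily on which the cross-intersections are compatible with a single linear order --- bounds the number of pairwise non-adjacent vertices of a fixed signature $(a,b)$ by $\binom{a+b}{a}\le 2^{a+b}$, and summing over signatures shows that all but $o(n)$ vertices have degree at least $\log_2 n-o(\log_2 n)$; hence the low-degree vertices alone contribute at least $(1-o(1))n\log_2 n$ to $\sum_v\deg(v)=2m$. To recover the stated constant one must additionally pin down the high-degree ``hub'' vertices: the hub-neighbourhood of a low-degree vertex, read with signs, essentially determines that vertex, so each low-degree vertex is forced to have about $\log_2 n$ distinct hub-neighbours; the number of hub--leaf edges is therefore at least $(1-o(1))n\log_2 n$, which forces a further $(1-o(1))n\log_2 n$ into $\sum_v\deg(v)$, and altogether $m\ge(1-o(1))n\log_2 n$.

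\emph{General $d$ and the main obstacle.} For general $d$ the same scheme runs with the $2$-path replaced by a meet-in-the-middle split of a path of length at most $d$, so that a non-edge $uv$ forces $N^+_{\le\lceil d/2\rceil}(u)$ to meet $N^-_{\le\lfloor d/2\rfloor}(v)$ (or the reverse); the random-order events then give a base-$d$ analogue of the count above, optimised to $\deg(v)\gtrsim\log_d n$ on average. The principal difficulty is that the cross-intersection is guaranteed in only \emph{one} of the two directions, so the clean fractional Bollob\'as inequality is unavailable and one must argue over a consistent linear ordering (or probabilistically over the order), and then absorb both the non-uniformity of the degree sequence and the loose control of the ball sizes $|N^{\pm}_{\le j}(v)|$ in terms of degrees. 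Quantifying all of these losses so that together they cost only the stated slack $-4\log_d\log_d n-5$, rather than a cruder lower-order term, is the delicate part of the argument.
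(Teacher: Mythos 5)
This theorem is \emph{cited} from Kostochka, Luczak, Simonyi and Sopena~\cite{kostochka1999minimum} and is not proved in the paper, so there is no internal proof to compare your proposal against. Evaluating it on its own: your $d=2$ upper-bound construction is correct and complete. With $k=\lceil\log_2 n\rceil$ coordinate vertices, distinct binary labels on the $n-k$ data vertices, and the unit vectors $e_1,\dots,e_k$ among the labels, every data--data pair is joined via the coordinate vertex indexed by a differing bit, every coordinate pair $c_i,c_j$ is joined via $c_i\to e_j\to c_j$, and every data--coordinate pair is adjacent; the arc count is exactly $k(n-k)$.

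However, two genuine gaps remain. First, the general-$d$ upper bound is not actually constructed. With only $k=\lceil\log_d n\rceil$ coordinate vertices and one arc per (data, coordinate) pair, each data vertex carries a \emph{binary} signature of length $k$, so at most $2^k$ distinct signatures exist; but for $d>2$ one only has $d^k\ge n$, and $2^k$ can be far smaller than $n-k$. Thus ``two data vertices disagree in some coordinate'' fails, and the phrase ``realising each digit by a suitably oriented short walk through the coordinate vertices'' is not a construction --- it is precisely the missing idea. Second, the lower-bound argument is a sketch that does not reach the stated bound. The one-sided cross-intersection does defeat the symmetric Bollob\'as inequality, as you note, but your replacement --- a random-order count over ``signatures'' followed by a ``hub'' argument --- is never carried out: the claimed bound of $\binom{a+b}{a}$ on pairwise non-adjacent vertices of a given signature is asserted, not derived; the passage from ``most vertices have degree $\gtrsim\log_2 n$'' to $m\ge(1-o(1))n\log_2 n$ double-counts without justification (a vertex's degree contributes to $\sum_v\deg(v)=2m$, so you cannot add a further $(1-o(1))n\log_2 n$ from hub--leaf incidences on top); and you explicitly concede that obtaining the specific slack $-4\log_d\log_d n-5$ is open. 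As it stands the proposal establishes only the $d=2$ upper bound.
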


Theorems~\ref{thm:intro2} and \ref{thm:intro3} both improved upon Theorem~\ref{thm:intro1} but arrived at the same conclusion which is as follows:

\begin{theorem}[F\"{u}redi, Horak, Pareek and Zhu~\cite{furedi1998minimal} and Kostochka, Luczak, Simonyi and Sopena~\cite{kostochka1999minimum}]
Let $f_2(n)$ be the minimum number of arcs in an absolute oriented clique of order $n$. Then,
\[\lim_{n \to \infty} \frac{f_2(n)}{n \log_2 n} = 1.\]
\end{theorem}

\subsection{Motivation and our contributions} 
Thus, the function $f_2(n)$ is asymptotically equal to $n \log_2 n$ and consequently, is asymptotically increasing. However, as the exact value of the function remains unknown, particular properties of $f_2(n)$ remain unknown to date. In fact, in one of the personal conversations of the fourth author with Sopena, it was detected that the seemingly easy question ``is the function $f_2(n)$ increasing?'' does not have a trivial answer. In this article, we answer that question by proving the following result.  

\begin{theorem}\label{thm:incr}
The function $f_2(n)$ is strictly increasing. 
\end{theorem}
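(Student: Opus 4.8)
The plan is to show that $f_2(n) < f_2(n+1)$ for every $n$. The natural first move is the ``easy direction'': given any oriented graph $\overrightarrow{G}$ on $n+1$ vertices with weak diameter at most $2$, deleting a vertex $v$ leaves an oriented graph on $n$ vertices — but weak diameter is \emph{not} monotone under vertex deletion, since a $2$-path witnessing a pair $xy$ may pass through $v$. So the first step is to prove a replacement lemma: if $\overrightarrow{G}$ realizes $f_2(n+1)$, then one can delete a carefully chosen vertex and add back a bounded number of arcs (ideally fewer than the degree of the deleted vertex) to repair all broken $2$-paths, yielding an oriented graph on $n$ vertices with weak diameter $2$ and strictly fewer arcs. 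Equivalently — and this is likely cleaner — I would argue contrapositively: take an extremal $\overrightarrow{H}$ on $n$ vertices with $|A(\overrightarrow{H})| = f_2(n)$ and show we can \emph{add} one vertex together with at most... well, that goes the wrong way. So the deletion-and-repair framing is the one to pursue.

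The key structural input is a lower bound on the degree of vertices in an extremal graph, so that deleting a vertex frees up enough arcs to pay for the repairs. Concretely, I expect to prove: in an oriented graph of weak diameter $2$ on $m$ vertices, every vertex $v$ has $d(v) \geq$ something growing (at least $\Omega(\log m)$, since the out-neighbourhood-of-out-neighbourhood and in-neighbourhood-of-in-neighbourhood must between them ``cover'' the $\approx m$ non-neighbours of $v$, forcing $2^{d^+(v)} + 2^{d^-(v)} \gtrsim m$, hence $d(v) \gtrsim \log_2 m$). Meanwhile, when we delete $v$, the only pairs whose $2$-path is destroyed are pairs $\{x,y\}$ with $x,y \in N(v)$ and $xvy$ (or $yvx$) a directed $2$-path and $x,y$ non-adjacent in $\overrightarrow{H} - v$; for each such broken pair we add one new arc (in an admissible direction, avoiding creating digons — there is always a safe choice since $x,y$ were non-adjacent). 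The number of broken pairs is at most the number of directed $2$-paths through $v$, which is $d^+(v) \cdot d^-(v)$, and that is generally much larger than $d(v)$ — so a naive repair is \emph{too expensive}. This is the main obstacle.

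To get around it, I would not repair pairwise but instead re-route more globally: after deleting $v$, pick a single vertex $w$ (say an in-neighbour of $v$ of largest in-degree, or a specially chosen one) and add arcs so that $w$ simultaneously serves as the midpoint of short paths for many of the orphaned pairs; or alternatively, choose the deleted vertex $v$ to be one of \emph{minimum degree} and argue that a minimum-degree vertex in an extremal configuration cannot be on too many ``essential'' $2$-paths — i.e., most pairs $x,y \in N(v)$ that need $v$ also have an alternative common neighbour, because $\overrightarrow{H}$ is edge-minimal and $v$ has small degree. Making this precise is the technical heart: one wants a counting/averaging argument showing $\sum_v (\#\text{pairs essentially using } v) \le$ (total number of $2$-paths) and comparing with $\sum_v d(v) = 2|A|$, so that some vertex $v$ has $(\#\text{essential pairs through }v) < d(v)$; deleting that $v$ and repairing gives a graph on $n$ vertices with $< f_2(n+1)$ arcs and weak diameter $2$, contradicting minimality of $f_2(n)$ unless $f_2(n) < f_2(n+1)$.

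An alternative, possibly slicker route worth keeping in the back pocket: instead of deletion, use a \emph{compression/identification} argument — identify two vertices of an extremal graph on $n+1$ vertices that have ``compatible'' neighbourhoods, producing a graph on $n$ vertices; or embed an extremal graph on $n$ vertices into a minimal extension on $n+1$ vertices and count directly. But I expect the deletion-with-clever-repair argument, combined with the $d(v) \gtrsim \log_2 m$ degree bound and an averaging argument bounding essential $2$-paths, to be the cleanest path to $f_2(n+1) > f_2(n)$; the inequality need only be strict by $1$, so there is a little slack to exploit, and indeed the point is precisely that deleting a vertex removes at least one arc while the repair can be arranged to add strictly fewer.
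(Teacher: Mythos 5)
Your write-up is a plan, not a proof: at the very step you identify as ``the technical heart'' (finding a vertex $v$ with fewer essential $2$-paths than degree) you propose an averaging strategy but do not carry it out, and it is far from clear it would work. The claimed degree bound $d(v)\gtrsim \log_2 m$ is also not established (the heuristic ``$2^{d^+(v)}+2^{d^-(v)}\gtrsim m$'' confuses reachability-in-$2$-steps with a binary-encoding count; in fact, the set of vertices reachable from $v$ in two steps has size at most $\sum_{u\in N^+(v)} d^+(u)$, not $2^{d^+(v)}$). More fundamentally, the naive repair count $d^+(v)\cdot d^-(v)$ that you correctly flag as too expensive is never actually brought down, so the argument does not close.

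The paper's proof sidesteps all of this with a different mechanism you do not consider: \emph{pushing}. Fix $w$ with in-neighbours $w_1,\dots,w_t$ and push them one at a time. Pushing a vertex $x$ keeps every directed $2$-path not through $x$ intact and merely reverses those through $x$, so every pair of vertices other than $x$ still sees each other; hence the oriented chromatic number drops by at most one (Lemma~\ref{lem chi by 1}). If at some stage it does drop, there is a non-adjacent pair $x,x'$ agreeing on all common neighbours (Lemma~\ref{lem x-x'}); identifying them yields an $n$-vertex absolute oriented clique with strictly fewer arcs because $x$ and $x'$ shared at least one neighbour. If no push ever fails, after all $t$ pushes $w$ is a source and can simply be deleted (Lemma~\ref{lem source del}), again losing at least one arc. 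Your ``back-pocket'' identification idea is in spirit half of this, but without the pushing machinery you have no way to locate a mergeable pair, and your main route (delete, bound the damage, repair) is not the one that works here. If you want to salvage your own approach you would need to actually prove both the degree bound and the averaging claim, neither of which is in hand.
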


Furthermore, we conjecture an exact value of $f_2(n)$ and 
prove an upper bound of $f_2(n)$ which matches our conjecture, and also improves the best known upper bound for the same. The conjectured value is given via a recurrence relation, and thus, for stating our conjecture, we would first like to present a recursive sequence $x_n$ in the following. 
        
\begin{defn}
    The initial values of the sequence $x_m$ of integers are given by $x_1 = 0, x_2 = 1, x_3 = 2, x_4 = 4, x_5 = 5$. For $m \geq 6$, the sequence $x_m$ is given by the following recurrence relation: 
\begin{equation}
\label{eq:xn}
    x_m = (m - 1) + 
        \begin{cases} 
        x_{\frac{m + 1}{2}} + x_{\frac{m - 3}{2}} &  \text{if } m \equiv 1 \pmod 4\\
        2 x_{\frac{m - 1}{2}} &  \text{if } m \equiv 3 \pmod 4\\
        x_{\frac{m}{2}} + x_{\frac{m - 2}{2}} & \text{ otherwise.}\\
        \end{cases}
\end{equation}
\end{defn}

\begin{conjecture}
\label{conj:main}
    For all positive integers $n$, we have $f_2(n) = x_n$.
\end{conjecture}

Our next contribution is to prove that $x_n$ serves as an upper bound of $f_2(n)$. In particular, this improves the best known upper bound of $f_2(n)$.

\begin{theorem}
\label{thm:upper}
    For all positive integers $n$, we have $f_2(n) \leq x_n$.
\end{theorem}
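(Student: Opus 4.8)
The plan is to prove Theorem~\ref{thm:upper} by a direct recursive construction that mirrors the definition of the sequence $x_n$: for every $n$ I will build an oriented graph $\overrightarrow{G_n}$ on $n$ vertices with $diam_w(\overrightarrow{G_n}) \le 2$ and exactly $x_n$ arcs. For the base cases $n \in \{1,2,3,4,5\}$ I would take, respectively, the single vertex, a single arc, the directed path on three vertices, the directed $4$-cycle, and the directed $5$-cycle. Each is easily checked to be a valid oriented graph with the prescribed number of arcs ($0,1,2,4,5$), and in each of them every non-adjacent pair is joined by a directed path of length $2$ (for the directed cycles, this is the pair of vertices at distance $2$ along the cycle). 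This in particular disposes of exactly those small values for which the recursion would refer to the undefined term $x_0$.

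For $n \ge 6$ the recursive step goes as follows. Introduce a new vertex $v$ and partition the remaining $n-1$ vertices into two sets $A$ and $B$ whose sizes are precisely the two indices occurring in the relevant case of the recursion: $\{\tfrac{n+1}{2},\tfrac{n-3}{2}\}$ when $n\equiv 1 \pmod 4$, $\{\tfrac{n-1}{2},\tfrac{n-1}{2}\}$ when $n \equiv 3\pmod 4$, and $\{\tfrac n2,\tfrac{n-2}{2}\}$ when $n$ is even. In each case these two sizes are positive, sum to $n-1$, and are strictly smaller than $n$, so the induction hypothesis applies to them. On $A$ place a copy of $\overrightarrow{G_{|A|}}$, on $B$ a copy of $\overrightarrow{G_{|B|}}$, put no arc between $A$ and $B$, and finally add the arc $v \to a$ for every $a \in A$ and the arc $b \to v$ for every $b \in B$. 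The total arc count is $x_{|A|} + x_{|B|} + |A| + |B| = (n-1) + x_{|A|} + x_{|B|}$, which by the choice of $|A|,|B|$ is exactly $x_n$. It is an oriented graph: the only new arcs are incident to the fresh vertex $v$, none of them is a loop, and no two of them form a digon.

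It then remains to verify that $diam_w(\overrightarrow{G_n}) \le 2$. Two vertices inside $A$ (respectively inside $B$) are joined by a directed path of length at most $2$ already inside $\overrightarrow{G_{|A|}}$ (respectively $\overrightarrow{G_{|B|}}$) by the induction hypothesis, and such a path persists in $\overrightarrow{G_n}$. The vertex $v$ is adjacent to every other vertex. Finally, a vertex $a \in A$ and a vertex $b \in B$ are non-adjacent, but $b \to v \to a$ is a directed path of length $2$; hence every non-adjacent pair is connected by a directed $2$-path. Combining the base cases with this induction yields $f_2(n) \le |A(\overrightarrow{G_n})| = x_n$ for all $n$.

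I do not anticipate a genuine obstacle: this is the familiar ``one dominating vertex plus two arc-disjoint, mutually non-adjacent halves'' recursion, and both the oriented-graph condition and the weak-diameter condition fall out immediately. The one place that needs care is the bookkeeping of the inductive step — checking that, in each residue class modulo $4$, the prescribed sizes of $A$ and $B$ are admissible indices summing to $n-1$, so that $(n-1)+x_{|A|}+x_{|B|}$ is literally the right-hand side of the defining recursion of $x_n$ — together with the (routine) explicit verification of the five base cases.
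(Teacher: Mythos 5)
Your proposal is correct, and it is logically leaner than the paper's argument even though the underlying construction (a hub vertex $v$, in-arcs from one half, out-arcs to the other half, and two smaller oriented cliques on the halves) is the same. The paper defines $\overrightarrow{O}_n$ by choosing the split $n_1+n_2=n-1$ that \emph{minimizes} $|A(\overrightarrow{O}_{n_1})|+|A(\overrightarrow{O}_{n_2})|$, which forces it to prove Theorem~\ref{thm:cliprop} ($g(n)=x_n$), and in turn requires the whole chain of inequalities on $x_n$ in Lemmas~\ref{lem:x_nprop1}--\ref{cor:x_nprop4} to show that the minimizing split is exactly $\bigl\{\frac{n+1}{2},\frac{n-3}{2}\bigr\}$, $\bigl\{\frac{n-1}{2},\frac{n-1}{2}\bigr\}$ or $\bigl\{\frac{n}{2},\frac{n-2}{2}\bigr\}$ according to $n \bmod 4$. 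You instead hard-code those split sizes, so the arc count $(n-1)+x_{|A|}+x_{|B|}$ matches the defining recursion of $x_n$ by inspection, and none of the technical lemmas about $x_n$ are needed to get $f_2(n)\le x_n$. What the paper's extra work buys is the sharper statement that this choice of split is optimal within the hub-vertex family (which is what makes Conjecture~\ref{conj:main} plausible), but that is not required for the upper bound theorem itself. Your base cases also differ slightly (you use the directed $4$-cycle rather than the paper's $4$-vertex clique), but both have the required arc counts and weak diameter $2$, so this is immaterial.
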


After the proofs, we present a discussion on why we think our conjecture is true, and some possible approaches to solve it.

\subsection{Organization}
In this article, first we recall some basic preliminaries in Section~\ref{sec:sec2}. After that, in Sections~\ref{sec:sec3} and \ref{sec:sec4}, we prove Theorems~\ref{thm:incr} and \ref{thm:upper}, respectively. In Section~\ref{sec:sec5}, we talk about the application of Theorem~\ref{thm:upper} to the oriented achromatic number of graphs. In Section~\ref{sec:sec6},
we conclude by including some discussions about our conjecture and provide some evidence in its support.

\medskip
\noindent \textbf{Note:} A preliminary version of this work was presented as a poster at EuroComb 2023.

\section{Preliminaries}\label{sec:sec2}
Given an arc $\overrightarrow{uv}$ 
of $\overrightarrow{G}$, the vertex $u$ is an \textit{in-neighbor} of $v$ and the vertex $v$ is an \textit{out-neighbor} of $u$. 
The set of all in-neighbors (resp., out-neighbors) of $u$ is denoted by $N^-(u)$ (resp., $N^+(u)$).
Two vertices $u$ and $v$ in  $\overrightarrow{G}$  \textit{agree} on a third vertex $w$ if $w$ is either an in-neighbor or an out-neighbor of both $u$ and $v$. To \textit{push} a vertex $u$ 
is to reverse the orientations of the arcs  incident to $u$. Note that, pushing $u$ swaps the 
sets $N^+(u)$ and $N^{-}(u)$.

We first examine the effect of pushing a vertex in an absolute oriented clique. For convenience, we say that a vertex $u$ \textit{sees} a vertex $v$ if $u$ and $v$ are either adjacent or connected by a directed 2-path. Furthermore, if $u$ is connected to $v$ by a directed $2$-path with $w$ being the internal vertex, then we say that $u$ sees $v$
\textit{through} $w$. 

\section{Proof of Theorem~\ref{thm:incr}}
\label{sec:sec3}
If we push a vertex $x$ of an oriented graph $\overrightarrow{H}$, then the adjacency between two vertices of the graph remains invariant (though the direction of the arc may change). Moreover, if there were a directed $2$-path through $x$ in $\overrightarrow{H}$, 
after pushing $x$ it remains a directed $2$-path, even though the directions of its arcs get reversed. 

\begin{observation}\label{lem:chiby1}
Let $\overrightarrow{H}$ be an absolute oriented clique
and $\overrightarrow{H}'$ be the oriented graph obtained by pushing a vertex $x$ of $\overrightarrow{H}$. Then the vertices of $V(\overrightarrow{H}') \setminus \{x\}$ see each other in $\overrightarrow{H}'$ and $n - 1 \leq \chi_o(\overrightarrow{H}') \leq n$. 
\end{observation}

 A refinement of the above observation follows. 

\begin{lemma}\label{lem x-x'}
Let $\overrightarrow{H}$ be an absolute oriented clique on $n$ vertices and let  $\overrightarrow{H}'$ be the oriented graph obtained by pushing the vertex $x$ of $\overrightarrow{H}$. If $\chi_o(\overrightarrow{H}') = n - 1$, then there exists a vertex $x'$, non-adjacent to $x$, which agrees with $x$ on all common neighbors in $\overrightarrow{H}'$. 
\end{lemma}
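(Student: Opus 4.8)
The plan is to extract the promised vertex $x'$ directly from the structure of a homomorphism witnessing $\chi_o(\overrightarrow{H}') = n-1$. Since $\overrightarrow{H}'$ has $n$ vertices but oriented chromatic number $n-1$, any homomorphism $\phi$ of $\overrightarrow{H}'$ onto an oriented graph on $n-1$ vertices must identify exactly one pair of vertices, say $\phi(a) = \phi(b)$ with $a \neq b$, and be injective on the remaining $n-1$ vertices. By Lemma~\ref{lem chi by 1} (more precisely, the argument in its proof), every pair of vertices of $\overrightarrow{H}'$ other than pairs involving $x$ still sees each other in $\overrightarrow{H}'$; hence no two such vertices can be identified by $\phi$. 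Therefore the identified pair must involve $x$, i.e.\ $\{a,b\} = \{x, x'\}$ for some vertex $x' \neq x$. This is the candidate vertex.

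Next I would verify the two asserted properties of $x'$. First, $x'$ is non-adjacent to $x$ in $\overrightarrow{H}'$ (equivalently in $\overrightarrow{H}$, since pushing preserves adjacency): if $x$ and $x'$ were adjacent, then $\phi(x)\phi(x')$ would be an arc, contradicting $\phi(x) = \phi(x')$ in an oriented graph (no loops). Second, since $\overrightarrow{H}$ is an absolute oriented clique, by Lemma~\ref{lem gary-chip} every pair of distinct non-adjacent vertices is connected by a directed $2$-path; applying this to $x$ and $x'$ gives that they were connected by a directed $2$-path in $\overrightarrow{H}$.

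The main remaining point — and the one I expect to require the most care — is showing that $x$ and $x'$ agree on every common neighbor in $\overrightarrow{H}'$. Here ``common neighbor'' means a vertex $w$ adjacent to both $x$ and $x'$ in the underlying graph. Suppose toward a contradiction that $x$ and $x'$ do \emph{not} agree on some common neighbor $w$; then, up to relabelling, $\overrightarrow{wx}, \overrightarrow{x'w} \in A(\overrightarrow{H}')$, so $x' \to w \to x$ is a directed $2$-path from $x'$ to $x$. Applying $\phi$ would produce $\phi(x') \to \phi(w) \to \phi(x)$, a directed $2$-path from a vertex to itself (since $\phi(x) = \phi(x')$), which creates a directed cycle of length $2$ in the homomorphic image — impossible in an oriented graph (alternatively, it forces $\phi(w) = \phi(x)$, collapsing a second pair and contradicting that $\phi$ identifies only $\{x,x'\}$). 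Hence $x$ and $x'$ must agree on every common neighbor, completing the proof. The delicate part is making sure the case analysis on the orientation of the two arcs $wx$ and $wx'$ is exhaustive — precisely the four orientation patterns split into the two ``agree'' cases and the two ``disagree'' cases — and that in each disagree case the resulting $2$-path, after pushing and applying $\phi$, genuinely violates the oriented-graph (digon-free) condition on the image.
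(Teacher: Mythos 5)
Your proposal is correct and takes essentially the same approach as the paper's proof: use the pigeonhole/injectivity argument to extract a vertex $x'$ identified with $x$ by a homomorphism to an $(n-1)$-vertex target, note that identified vertices cannot be adjacent (no loops), invoke the clique property of $\overrightarrow{H}$ for the directed $2$-path, and then show the agreement on common neighbors by observing that a disagreement would map to a digon in the image, which is forbidden in an oriented graph. The paper phrases the last step directly (if $w$ is an in-neighbor of $x$ then $\phi(w)$ is an in-neighbor of $\phi(x')$, forcing $w$ to be an in-neighbor of $x'$), whereas you argue by contradiction via the closed directed $2$-path; these are the same argument.
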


\begin{proof}
Let $\chi_o(\overrightarrow{H}') = n - 1$. Hence, there exists an oriented graph $\overrightarrow{T}$ of order $(n - 1)$, such that $\phi$ is a homomorphism of $\overrightarrow{H}'$ to $\overrightarrow{T}$. By Observation~\ref{lem:chiby1}, we know that all vertices in $V(\overrightarrow{H}') \setminus \{x\}$, see each other and hence, have distinct images under any homomorphism. Thus by the pigeonhole principle, we must have $\phi(x) = \phi(x')$ for some $x' \in V(\overrightarrow{H}') \setminus \{x\}$. 
  
As $\phi(x) = \phi(x')$, $x$ and $x'$ cannot be adjacent in $\overrightarrow{H}$, and consequently, in $\overrightarrow{H}'$. Therefore, there exists a directed 2-path between $x$ and $x'$ in the absolute oriented clique $\overrightarrow{H}$. 

Let  $w$ be a common neighbor of $x$ and $x'$. Note that, if $w$ is an in-neighbor (resp., out-neighbor) of $x$ in $\overrightarrow{H}'$, then $\phi(w)$ is an in-neighbor (resp., out-neighbor) of $\phi(x) = \phi(x')$ in $\overrightarrow{T}$. This implies that $w$ is an in-neighbor (resp., out-neighbor) of $x'$ too, as $\phi$ is a homomorphism. Thus, $x$ and $x'$ agree on every common neighbors in $\overrightarrow{H}'$.   
\end{proof}

Let $\overrightarrow{H}$ be an oriented graph having two non-adjacent vertices $x$ and $x'$ which agree on their common neighbors. Let $\overrightarrow{H}_{(xx')}$ be the graph obtained by identifying $x$ and $x'$. Formally, $\overrightarrow{H}_{(xx')}$ is the oriented graph obtained from $\overrightarrow{H}$ by deleting the vertices $x$ and $x'$ and then adding a new vertex $v_{xx'}$ having the union of in-neighbors of $x$ and $x'$ as its in-neighbors and the union of out-neighbors of $x$ and $x'$ as its out-neighbors. Then, we have the following.

\begin{lemma}
\label{lem:lem3}
Let $\overrightarrow{H}$ be an absolute oriented clique on $n$ vertices and let  $\overrightarrow{H}'$ be the oriented graph obtained by pushing the vertex $x$ of $\overrightarrow{H}$. If $\chi_o(\overrightarrow{H}') = n - 1$ and if the non-adjacent vertices $x, x'$ agree on every common neighbor in $\overrightarrow{H}'$, then the oriented graph $\overrightarrow{H}'_{(xx')}$ is an absolute oriented clique. 
\end{lemma}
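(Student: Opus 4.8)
The plan is to show directly that $\overrightarrow{H}'_{(xx')}$ has weak diameter at most $2$ and then invoke Lemma~\ref{lem gary-chip}; equivalently, I would verify that every pair of distinct vertices of $\overrightarrow{H}'_{(xx')}$ sees the other. Before the case analysis I would record two facts. First, because $x$ and $x'$ agree on every common neighbour in $\overrightarrow{H}'$, no vertex is simultaneously an in-neighbour and an out-neighbour of the new vertex $(xx')$, so $\overrightarrow{H}'_{(xx')}$ is genuinely an oriented graph; moreover, for any $w \in V(\overrightarrow{H}')\setminus\{x,x'\}$ that is a neighbour of $x$ (respectively $x'$), the arc joining $w$ and $(xx')$ has the same orientation as the arc joining $w$ and $x$ (respectively $x'$). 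Consequently $N^-(x)\cup N^-(x') \subseteq N^-((xx'))$ and $N^+(x)\cup N^+(x') \subseteq N^+((xx'))$, where the neighbourhoods of $(xx')$ are taken in $\overrightarrow{H}'_{(xx')}$ and those of $x,x'$ in $\overrightarrow{H}'$ (note all in/out-neighbours of $x,x'$ survive in $\overrightarrow{H}'_{(xx')}$, since $x,x'$ are non-adjacent). Second, from the proof of Lemma~\ref{lem chi by 1}, every pair of distinct vertices of $\overrightarrow{H}'$ neither of which equals $x$ sees the other in $\overrightarrow{H}'$; in particular $x'$ sees every vertex of $V(\overrightarrow{H}')\setminus\{x,x'\}$.

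In the first case let $u,v \in V(\overrightarrow{H}')\setminus\{x,x'\}$ be distinct. By the second fact they see each other in $\overrightarrow{H}'$, so they are adjacent or joined by a directed $2$-path with some internal vertex $w$. If they are adjacent, or if $w \notin \{x,x'\}$, then the very same arc or $2$-path is present in $\overrightarrow{H}'_{(xx')}$ unchanged. If $w = x$ with the path $u \to x \to v$, then $u \in N^-(x) \subseteq N^-((xx'))$ and $v \in N^+(x) \subseteq N^+((xx'))$, so $u \to (xx') \to v$ is a directed $2$-path in $\overrightarrow{H}'_{(xx')}$; the first fact guarantees these are genuine arcs, i.e.\ no digon at $(xx')$. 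The other orientation of the path, and the subcase $w = x'$, are symmetric.

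In the second case I would show that $(xx')$ sees an arbitrary $v \in V(\overrightarrow{H}')\setminus\{x,x'\}$. By the second fact, $x'$ sees $v$ in $\overrightarrow{H}'$. If $x'$ and $v$ are adjacent, the corresponding arc yields, by the first fact, an arc between $(xx')$ and $v$ with the same orientation. Otherwise $x'$ and $v$ are joined by a directed $2$-path with internal vertex $w$; since $w$ is a neighbour of $x'$ while $x$ is non-adjacent to $x'$ in $\overrightarrow{H}'$, we get $w \ne x$, so $w$ is a vertex of $\overrightarrow{H}'_{(xx')}$ distinct from both $(xx')$ and $v$. Replacing the endpoint $x'$ by $(xx')$ (legitimate since $N^\pm(x') \subseteq N^\pm((xx'))$) while keeping the arc between $w$ and $v$ intact produces a directed $2$-path between $(xx')$ and $v$ in $\overrightarrow{H}'_{(xx')}$. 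The two cases together show that every pair of distinct vertices of $\overrightarrow{H}'_{(xx')}$ sees the other, hence $\overrightarrow{H}'_{(xx')}$ is an absolute oriented clique.

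I expect the difficulty to lie entirely in bookkeeping rather than in any conceptual leap: one must be careful that rerouting a $2$-path through $(xx')$ never creates a digon — precisely the point where the hypothesis that $x$ and $x'$ agree on their common neighbours is used — and that, in the second case, the internal vertex one wishes to retain is genuinely different from $x$, which is where the non-adjacency of $x$ and $x'$ is invoked. The hypothesis $\chi_o(\overrightarrow{H}') = n-1$ is not used beyond its role (via Lemma~\ref{lem x-x'}) of supplying the vertex $x'$ with the stated properties.
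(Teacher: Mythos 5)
Your proof is correct and takes essentially the same route as the paper's: both reduce to Lemma~\ref{lem gary-chip}, both invoke the proof of Lemma~\ref{lem chi by 1} to see that all vertices other than $x$ see one another in $\overrightarrow{H}'$, and both split into the case of two vertices different from $(xx')$ (rerouting any $2$-path through $x$ or $x'$ via the merged vertex) and the case of $(xx')$ against another vertex (using the fact that $x'$ already sees it). Your write-up is a bit more scrupulous than the paper's in two spots that the paper leaves implicit: that the agreement hypothesis is exactly what prevents a digon at $(xx')$, and that the internal vertex $w$ on a $2$-path from $x'$ cannot be $x$ because $x$ and $x'$ are non-adjacent.
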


\begin{proof}
To prove that $\overrightarrow{H}'' = \overrightarrow{H}_{(xx')}'$ is an absolute oriented clique, it is enough to show that the vertices of $\overrightarrow{H}''$ see each other due to Theorem~\ref{thm:occhar}. 

From Observation~\ref{lem:chiby1}, we know that the vertices of $V(\overrightarrow{H}') \setminus \{x\}$ see each other in $\overrightarrow{H}'$. Let $y,z \neq v_{xx'}$ be two distinct vertices in $\overrightarrow{H}''$. If $y$ and $z$ are adjacent in $\overrightarrow{H}'$, then they are adjacent in $\overrightarrow{H}''$ too. Suppose that $y$ sees $z$ through some vertex $w$ in $\overrightarrow{H}'$. If $w \neq x,x'$, then $y$ sees $z$ through $w$ in $\overrightarrow{H}''$ too. If $w \in \{x,x'\}$, then in $\overrightarrow{H}''$, $y$ sees $z$ through $v_{xx'}$. Thus, every vertex in $V(\overrightarrow{H}'') \setminus \{v_{xx'}\}$ sees every other vertex in $\overrightarrow{H}''$.

On the other hand, the fact that $v_{x x'}$ sees every vertex in $V(\overrightarrow{H}'') \setminus \{v_{x x'}\}$ is straightforward from the fact that $x'$ sees every vertex in $V(\overrightarrow{H}') \setminus \{x, x'\}$ and the fact that $N^+(x') \subseteq N^+(v_{xx'}), N^-(x') \subseteq N^-(v_{xx'})$.
\end{proof}

A vertex $x$ of an oriented graph $\overrightarrow{H}$ is a \textit{source} if $x$ has no in-neighbors.

\begin{lemma}\label{lem source del}
Let $\overrightarrow{H}$ be an absolute oriented clique on $n$ vertices and let $x$ be a source. Then the oriented graph $\overrightarrow{H} - x$ is also an absolute oriented clique. 
\end{lemma}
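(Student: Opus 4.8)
The plan is to invoke the characterization of absolute oriented cliques from Lemma~\ref{lem gary-chip}: it suffices to show that every pair of distinct vertices of $\overrightarrow{H} - x$ sees each other. So I would take arbitrary distinct vertices $y, z \in V(\overrightarrow{H}) \setminus \{x\}$ and, using that $\overrightarrow{H}$ is an absolute oriented clique, note that $y$ sees $z$ in $\overrightarrow{H}$, i.e.\ either $y$ is adjacent to $z$, or $y$ sees $z$ via some internal vertex $w$.

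The key observation — and really the only content of the argument — is that a source can never serve as the internal vertex of a directed $2$-path. Indeed, if $y$ sees $z$ via $w$, then $w$ has an in-arc (coming from $y$, up to relabelling) and an out-arc (going to $z$); but $x$ being a source has no in-neighbors, so $w \neq x$. Consequently, whichever witness certifies that $y$ sees $z$ in $\overrightarrow{H}$ — an arc $yz$ (or $zy$), or a directed $2$-path through some $w \neq x$ — involves only vertices of $V(\overrightarrow{H}) \setminus \{x\}$ and hence survives intact in $\overrightarrow{H} - x$. Therefore $y$ sees $z$ in $\overrightarrow{H} - x$.

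Since $y$ and $z$ were arbitrary, every pair of distinct vertices of $\overrightarrow{H} - x$ sees each other, so by Lemma~\ref{lem gary-chip} the oriented graph $\overrightarrow{H} - x$ has weak diameter at most $2$ and is an absolute oriented clique. There is no real obstacle here; the whole proof hinges on the elementary remark that a source cannot be an internal vertex of a directed $2$-path, after which the deletion is harmless.
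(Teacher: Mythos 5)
Your proof is correct and takes essentially the same approach as the paper: both hinge on the single observation that a source, having no in-neighbors, cannot be the internal vertex of a directed $2$-path, so every witness to ``$y$ sees $z$'' in $\overrightarrow{H}$ survives in $\overrightarrow{H} - x$. The paper's proof is identical in substance, just phrased more tersely.
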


\begin{proof}
Every pair of vertices in $V(\overrightarrow{H}) \setminus \{x\}$ see each other, but not through $x$, as $x$ is a source. So all vertices in $\overrightarrow{H} - x$ see each other, and therefore, $\overrightarrow{H} - x$ is an absolute oriented clique.
\end{proof}

Finally we are ready to prove Theorem~\ref{thm:incr}. 

\bigskip

\noindent \textit{Proof of Theorem~\ref{thm:incr}.}
Let $\overrightarrow{H}$ be an absolute oriented clique of order $n + 1$ having exactly $f_2(n + 1)$ arcs for some $n \geq 1$. We wish to show the existence of an absolute oriented clique on $n$ vertices having strictly less than $f_2(n + 1)$ arcs. 

Let $w \in V(\overrightarrow{H})$ and $N^-(w) = \{w_1, w_2, \cdots, w_t\}$. We define a sequence $\overrightarrow{H}^i$   of oriented graphs, where $i \in \{1, 2, \cdots, t\}$. 
The oriented graph $\overrightarrow{H}^i$ is obtained by pushing the vertices of the set $\{w_1, w_2, \cdots, w_i\}$, and $\overrightarrow{H}^0 = \overrightarrow{H}$. Then the following two cases are possible.

\medskip
\noindent \textit{Case 1 - $\overrightarrow{H}^t$ is not an absolute oriented clique}: Then there exists an $i \in \{1, 2, \cdots, t\}$ such that $\overrightarrow{H}^0, \overrightarrow{H}^1, \cdots, \overrightarrow{H}^{i - 1}$ are all absolute oriented cliques while $\overrightarrow{H}^i$ is not. Since $\overrightarrow{H}^i$ is obtained by pushing a vertex $w_i$ in the oriented clique $\overrightarrow{H}^{i - 1}$, by Lemma~\ref{lem x-x'}, there exists a pair of non-adjacent vertices, say $x, x'$, in $\overrightarrow{H}^i$ which agree on all their common neighbors. Then, by Lemma~\ref{lem:lem3}, the oriented graph $\overrightarrow{H}^i_{(xx')}$ is an absolute oriented clique on $n$ vertices. Since $x$ and $x'$ see each other in $\overrightarrow{H}^{i - 1}$, they have at least one common neighbor in $\overrightarrow{H}^{i - 1}$. Therefore, $\overrightarrow{H}^i_{(xx')}$ is an absolute oriented clique of order $n$ and has size strictly less than $|A(\overrightarrow{H}^i)| = |A(\overrightarrow{H})| = f_2(n + 1)$. Hence, $f_2(n) < f_2(n + 1)$.

\medskip
\noindent \textit{Case 2 - $\overrightarrow{H}^t$ is an absolute oriented clique}: Since $w \in \overrightarrow{H}^t$ is a source, by Lemma~\ref{lem source del}, the oriented graph $\overrightarrow{H}^t - w$ is an absolute oriented clique on $n$ vertices. Since $deg(w) \geq 1$, the number of arcs in $\overrightarrow{H}^t - w$ is strictly less than $|A(\overrightarrow{H}^t)| = |A(\overrightarrow{H})| = f_2(n + 1)$. Therefore, $f_2(n) < f_2(n + 1)$.

\medskip
Thus, the function $f_2(n)$ is strictly increasing. \qed

\section{Proof of Theorem~\ref{thm:upper}}
\label{sec:sec4}
To prove this result, we will first describe the construction of a sequence of absolute oriented cliques $\overrightarrow{O}_n$ for all $n \geq 1$ in such a way that $\overrightarrow{O}_n$ has $x_n$ arcs (recall the sequence $((x_m))$ defined in Section~\ref{sec:sec1}). Observe that this is enough to prove Theorem~\ref{thm:upper}. Our proof consists of two parts: first we will describe the construction, and next we will show that our construction indeed satisfies $|A(\overrightarrow{O}_n)| = x_n$. For the second part of the proof, we also need to prove some properties of the sequence $((x_m))$. For the convenience of the reader, we will present our proof across a few subsections. 

\subsection{The construction of $\overrightarrow{O}_n$}
\label{cons:main}
The absolute oriented cliques $\overrightarrow{O}_n$, for $n \leq 5$, are depicted explicitly in Fig.~\ref{fig:smacliq}. One can easily verify that they are
absolute oriented cliques. Also, according to Sopena~\cite{sopena2014complete}, each of them has $f_2(n)$ many arcs, which in this case is the same as $x_n$ (the initial values).

\begin{figure}[t]
    \centering
\begin{tikzpicture}[scale=0.5]
\tikzset{vertex/.style = {shape=circle,fill,inner sep=2pt,minimum size=3pt}}
\node[vertex] (x) at  (-13.5,0) {};
\node[vertex] (y) at  (-11.5,1.5) {};
\node[vertex] (z) at  (-11.5,-0.5) {};
\node[vertex] (w) at  (-9.5,2.5) {};
\node[vertex] (u) at  (-9.5,0.5) {};
\node[vertex] (v) at  (-9.5,-1.5) {};
\node[vertex] (a) at  (-7.5,0) {};
\node[vertex] (b) at  (-5.5,0) {};
\node[vertex] (c) at  (-3.5,-1.5) {};
\node[vertex] (d) at  (-3.5,1.5) {};
\node[vertex] (A) at  (18:2.5) {};
\node[vertex] (B) at  (90:2.5) {};
\node[vertex] (C) at  (162:2.5) {};
\node[vertex] (D) at  (234:2.5) {};
\node[vertex] (E) at  (306:2.5) {};
\draw[ultra thick,->] (z) -- (y);
\draw[ultra thick,->] (v) -- (u);
\draw[ultra thick,->] (u) -- (w);
\draw[ultra thick,->] (a) -- (b);
\draw[ultra thick,->] (b) -- (c);
\draw[ultra thick,->] (b) -- (d);
\draw[ultra thick,->] (c) -- (d);
\draw[ultra thick,->] (A) -- (B);
\draw[ultra thick,->] (B) -- (C);
\draw[ultra thick,->] (C) -- (D);
\draw[ultra thick,->] (D) -- (E);
\draw[ultra thick,->] (E) -- (A);
\node at (-13.5,-1) {$\overrightarrow{O}_1$};
\node at (-11.5,-1.5) {$\overrightarrow{O}_2$};
\node at (-9.5,-2.5) {$\overrightarrow{O}_3$};
\node at (-5.5,-2) {$\overrightarrow{O}_4$};
\node at (0,-3) {$\overrightarrow{O}_5$};
\end{tikzpicture}
    \caption{The absolute oriented cliques $\protect\overrightarrow{O}_1, \protect\overrightarrow{O}_2, \protect\overrightarrow{O}_3, \protect\overrightarrow{O}_4$, and $\protect\overrightarrow{O}_5$.}
    \label{fig:smacliq}
\end{figure}

Given two oriented graphs $\overrightarrow{G}_1$ and $\overrightarrow{G}_2$, the oriented graph $\overrightarrow{G}_1 \ltimes \overrightarrow{G}_2$ is obtained as follows:
we take the disjoint union of $\overrightarrow{G}_1$ and $\overrightarrow{G}_2$ and add a new vertex $v$ (say). Then, add arcs of the form $\overrightarrow{uv}$ for all vertices $u \in V(\overrightarrow{G}_1)$ and arcs of the form $\overrightarrow{vw}$ for all vertices 
$w \in V(\overrightarrow{G}_2)$. This construction is illustrated in Figure~\ref{fig:restricted}. A particularly interesting and useful property of $\overrightarrow{G}_1 \ltimes \overrightarrow{G}_2$ is the following that makes this construction important in the study of oriented coloring.

\begin{proposition}
\label{prop:crnoofG1+G2}
    Let $\overrightarrow{G}_1$ and $\overrightarrow{G}_2$ be two oriented graphs. Then 
    $\chi_o(\overrightarrow{G}_1 \ltimes \overrightarrow{G}_2) = \chi_o(\overrightarrow{G}_1) + \chi_o(\overrightarrow{G}_2) + 1$. 
\end{proposition}
\begin{proof}
    First note that the vertices of $\overrightarrow{G}_1$
    see the vertices of $\overrightarrow{G}_2$ through the special vertex $v$ (from the definition of $\overrightarrow{G}_1 \ltimes \overrightarrow{G}_2$). Thus the vertices of $\overrightarrow{G}_1$ and the vertices of $\overrightarrow{G}_2$ have distinct images under any homomorphism due to Theorem~\ref{thm:occhar}. Moreover, as $v$ is adjacent to every other vertex, it must have an image distinct 
    from every other vertex under any homomorphism. Therefore, we have $\chi_o(\overrightarrow{G}_1 \ltimes \overrightarrow{G}_2) \geq  \chi_o(\overrightarrow{G}_1) + \chi_o(\overrightarrow{G}_2) + 1$.

    On the other hand, there exists an oriented graph     $\overrightarrow{H}_i$ on $\chi_o(\overrightarrow{G}_i)$ vertices such that $\overrightarrow{G}_i$ admits a homomorphism to     $\overrightarrow{H}_i$, for $i \in \{1,2\}$. Thus, $\chi_o(\overrightarrow{G}_1 \ltimes \overrightarrow{G}_2)$ admits a homomorphism to $\chi_o(\overrightarrow{H}_1 \ltimes \overrightarrow{H}_2)$. As $\chi_o(\overrightarrow{H}_1 \ltimes \overrightarrow{H}_2)$ has exactly $\chi_o(\overrightarrow{G}_1) + \chi_o(\overrightarrow{G}_2) + 1$ vertices, we have $\chi_o(\overrightarrow{G}_1 \ltimes \overrightarrow{G}_2) \leq  \chi_o(\overrightarrow{G}_1) + \chi_o(\overrightarrow{G}_2) + 1$.
\end{proof}

If we choose $\overrightarrow{G}_1$ and $\overrightarrow{G}_2$ to be absolute oriented cliques, then a direct corollary of Proposition~\ref{prop:crnoofG1+G2} tells us that 
$\overrightarrow{G}_1 \ltimes \overrightarrow{G}_2$ is also an absolute oriented clique.

\begin{figure}[t]
    \centering
\begin{tikzpicture}
\filldraw[black] (4,1) circle (3pt);
\draw[ultra thick,->] (0.75,2.5) -- (3.8,1.1);
\draw[ultra thick,->] (0.75,-0.5) -- (3.8,0.9);
\draw[ultra thick,->] (4.2,1.1) -- (7.25,2.5);
\draw[ultra thick,->] (4.2,0.9) -- (7.25,-0.5);
\draw (0,1) ellipse (1.25 and 2);
\draw (8,1) ellipse (1.25 and 2);
\node at (0,1) {$\overrightarrow{G}_1$};
\node at (8,1) {$\overrightarrow{G}_2$};
\node at (4,0.5) {$v$};
\end{tikzpicture}
    \caption{The oriented graph $\overrightarrow{G}_1 \ltimes \overrightarrow{G}_2$.}
    \label{fig:restricted}
\end{figure}

\begin{corollary}
\label{cor:butgraphisoclique}
   Let $\overrightarrow{G}_1$ and $\overrightarrow{G}_2$ be two absolute oriented cliques, then the oriented graph $\overrightarrow{G}_1 \ltimes \overrightarrow{G}_2$
   is also an absolute oriented clique. 
\end{corollary}
\begin{proof}
    Let $\overrightarrow{G}_1$ and $\overrightarrow{G}_2$ be absolute oriented cliques, that is, $\chi_o(\overrightarrow{G}_1) = |V(\overrightarrow{G}_1)|$ and $\chi_o(\overrightarrow{G}_2) = |V(\overrightarrow{G}_2)|$. We know that $\chi_o(\overrightarrow{G}_1 \ltimes \overrightarrow{G}_2) = \chi_o(\overrightarrow{G}_1) + \chi_o(\overrightarrow{G}_2) + 1 = |V(\overrightarrow{G}_1)| + |V(\overrightarrow{G}_2)| + 1$ due to Proposition~\ref{prop:crnoofG1+G2}, which is precisely the number of vertices in $\overrightarrow{G}_1 \ltimes \overrightarrow{G}_2$. Hence, by definition, the oriented graph $\overrightarrow{G}_1 \ltimes \overrightarrow{G}_2$ is also an absolute oriented clique.
\end{proof}

In particular, Proposition~\ref{prop:crnoofG1+G2} also proves a recursive upper bound for the function $f_2(n)$.

\begin{corollary}
    For any $n \geq 3$, we have 
    $$f_2(n) \leq (n-1) + \min\{f_2(n_1)+f_2(n_2) : n_1+n_2 = n-1\}.$$
\end{corollary}
\begin{proof}
    Let $\overrightarrow{G}_{n_1}$ and $\overrightarrow{G}_{n_2}$ be absolute oriented cliques of order $n_1$ and $n_2$ respectively. Let $|A(\overrightarrow{G}_{n_1})| = f_2(n_1)$ and $|A(\overrightarrow{G}_{n_2})| = f_2(n_2)$. If $\overrightarrow{G}_n = \overrightarrow{G}_{n_1} \ltimes \overrightarrow{G}_{n_2}$, then by the construction, $n_1 + n_2 = n - 1$, and by Corollary~\ref{cor:butgraphisoclique}, $\overrightarrow{G}_n$ is an absolute oriented clique. Hence, the number of arcs of $\overrightarrow{G}_n$ is an upper bound for $f_2(n)$. Now we can vary the values of $n_1$ and $n_2$, and choose the values where we get the minimum possible bound of $f(n)$ in the 
    above-mentioned process. This gives us the desired result. 
\end{proof}

Intuitively, we feel that the upper bound presented in the above corollary is actually tight. Our construction is based on this intuition. Recall that 
the initial $\overrightarrow{O}_n$'s, that is, $\overrightarrow{O}_n$ for $n \leq 5$, were depicted explicitly in Fig.~\ref{fig:smacliq}. 
For $n \geq 6$, we define 
$$\overrightarrow{O}_n = \overrightarrow{O}_{n_1} \ltimes \overrightarrow{O}_{n_2}$$ 
where $n_1 + n_2 = n - 1$ and $n_1$ and $n_2$ are chosen such that $|A(\overrightarrow{O}_{n_1})| + |A(\overrightarrow{O}_{n_2})|$ is minimum.
Our next task is to show that the number of arcs in 
$\overrightarrow{O}_n$ is indeed equal to $x_n$ for all 
$n \geq 1$. We will establish this part of the proof in the subsequent sections.

\subsection{Properties of the sequence $((x_m))$}
\begin{lemma}
\label{lem:x_nprop1}
For all $m \geq 3$, $$x_{m + 1} = \begin{cases}
        x_{m} + x_{m - 1} - x_{m - 2} + 1, & \text{if }m = 3 \cdot 2^i \text{ or } 3 \cdot 2^i - 1, \text{ for some } i \geq 0,\\
        x_{m} + x_{m - 1} - x_{m - 2}, & \text{otherwise.}
    \end{cases}$$
\end{lemma}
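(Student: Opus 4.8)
The plan is to prove the identity by strong induction on $n$, using the defining recursion \eqref{eq:xn} to rewrite $x_n$, $x_{n-1}$, $x_{n-2}$, $x_{n-3}$ in terms of values near $n/2$, and then to recognise the resulting remainder as the very same identity one level down. It is convenient to set $g(n) := x_n - x_{n-1} - x_{n-2} + x_{n-3}$, so that the assertion of the lemma becomes: $g(n) = 1$ when $n$ is of the form $3\cdot 2^i$ or $3\cdot 2^i + 1$, and $g(n) = 0$ otherwise.

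First I would dispose of the small indices $4 \le n \le 8$ by direct computation from $x_1,\dots,x_5$ together with \eqref{eq:xn}; these are precisely the values of $n$ for which \eqref{eq:xn} cannot be invoked for every one of $x_n,\dots,x_{n-3}$, since the latter requires $n-3 \ge 6$. For $n \ge 9$, the four indices $n, n-1, n-2, n-3$ fill out all four residue classes modulo $4$, so each of the four terms is expanded by a determined branch of \eqref{eq:xn}. Performing the substitution and simplifying --- the linear-in-$n$ contributions cancel, as do all but one of the ``half-index'' terms --- yields
\[
g(n) =
\begin{cases}
g(n/2), & n \equiv 0 \pmod 4,\\
g((n+1)/2), & n \equiv 1 \pmod 4,\\
0, & n \equiv 2 \text{ or } 3 \pmod 4.
\end{cases}
\]
In the first two lines the argument on the right is smaller than $n$ and at least $4$, so the induction hypothesis applies; one should also check along the way that every application of \eqref{eq:xn} is legitimate (e.g.\ it is applied to $x_{n-3}$, so one needs $n-3\ge 6$, which forces the $n\equiv 0\pmod 4$ branch to start only at $n=12$, while $n=9,10,11$ are already covered by the $n\equiv 1,2,3\pmod 4$ computations).

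It then remains to match this recursion for $g$ with the claimed closed form, i.e.\ to record how the ``special set'' $S = \{3\cdot 2^i\} \cup \{3\cdot 2^i + 1\}$ behaves under halving: for $n \ge 9$ with $n \equiv 0 \pmod 4$ one has $n \in S \iff n/2 \in S$ (the exponent simply drops by one), for $n \equiv 1 \pmod 4$ one has $n \in S \iff (n+1)/2 \in S$, and no $n \ge 9$ with $n \equiv 2$ or $3 \pmod 4$ lies in $S$ at all (the only members of $S$ in those two classes are $6$ and $7$ respectively, both already handled among the base cases). Combining these elementary facts with the displayed recursion and the induction hypothesis completes the argument.

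The main obstacle I anticipate is organisational rather than conceptual: the substitution step branches into four --- or, if one also tracks the parity of $\lfloor n/2\rfloor$, eight --- structurally similar but individually fiddly computations, and one must be attentive to the seam between the base-case range and the inductive range. For instance $g(8) \ne g(4)$, which is consistent only because $n=8$ is a base case and the displayed recursion for $g$ holds only from $n=9$ onward (from $n=12$ onward in the $n\equiv 0\pmod 4$ branch). Ensuring that the $+1$ corrections in \eqref{eq:xn} align exactly across the four branches, and that the boundary indices $6$ and $7$ of $S$ are absorbed into the base cases, is where the bookkeeping must be done carefully.
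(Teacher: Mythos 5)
Your strategy coincides with the paper's: strong induction on $n$, expanding each of $x_n,x_{n-1},x_{n-2},x_{n-3}$ by the branch of \eqref{eq:xn} dictated by $n \bmod 4$, and then observing that the halved quantity is the same identity one level down. Your reformulation via $g(n):=x_n-x_{n-1}-x_{n-2}+x_{n-3}$ and the derived halving recursion $g(n)=g(n/2)$ (for $n\equiv 0$), $g(n)=g((n+1)/2)$ (for $n\equiv 1$), $g(n)=0$ (for $n\equiv 2,3$) is a tidier packaging of the same computation the paper carries out term by term, and I verified that the four substitutions do yield exactly this recursion for $n\ge 9$. So this is the paper's proof, slightly streamlined.

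One thing your base-case check will actually surface, and which you should not gloss over: from $x_1,\dots,x_5$ one computes $g(4)=4-2-1+0=1$, yet $4$ is not of the form $3\cdot 2^i$ or $3\cdot 2^i+1$ with $i\in\mathbb{Z}_{\ge 1}$ (the smallest such values are $6$ and $7$). As literally stated the lemma is therefore false at $n=4$. The paper's own base-case line ``$x_3+x_2-x_1+1=4=x_4$'' silently treats $n=4$ as a ``$+1$'' case, which is only consistent if the exceptional set is taken to be $\{3\cdot 2^i,\ 3\cdot 2^i+1 : i\in\mathbb{Z}_{\ge 0}\}$, so that $4=3\cdot 2^0+1$ belongs to it. You should either amend the special set to $i\ge 0$ (which is what the induction actually propagates: the halving map sends $3\cdot 2^i\mapsto 3\cdot 2^{i-1}$ and $3\cdot 2^i+1\mapsto 3\cdot 2^{i-1}+1$, with $\{3,4\}$ as the fixed endpoint of the descent) or explicitly list $n=4$ among the exceptional base cases. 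With that correction your proof goes through; without it, your ``dispose of $4\le n\le 8$ by direct computation'' step would fail at $n=4$.
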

\begin{proof}
We prove this using strong induction on $m$, that is, suppose that the identity stated above is true for all $m \in \{3, 4, 5, \ldots, k - 1\}$. 
However, in hindsight we know that the proof of the induction step will work uniformly for all $m = k \geq 6$. Therefore, even though it is enough to verify the statement for $m = 3$ to prove the base case, we will verify the statement for all $m = 3, 4, 5$ which will serve as our base case. 

For the base cases, we will verify the correctness of the given identity for $m \in \{3,4,5\}$. As these are just straightforward calculations, we simply list them below for the convenience of the readers. 

\begin{itemize}
    \item \textit{For $m = 3$:} $x_{3} + x_{2} - x_{1} + 1 = 2 + 1 - 0 + 1 = 4 = x_{4}$.
    \item \textit{For $m = 4$:} $x_{4} + x_{3} - x_{2} = 4 + 2 - 1 = 5 = x_{5}$.
    \item \textit{For $m = 5$:} $x_{5} + x_{4} - x_{3} + 1 = 5 + 4 - 2 + 1 = 8 = x_{6}$.
\end{itemize}

Next, we are going to prove the induction step, that is, assuming the identity from the statement is true for all $m \leq k - 1$, we will prove that the identity holds for $m = k$. As the identity depends on the residue modulo $4$ value of $k$, we need to consider some cases.

\medskip
\noindent \textit{Case 1 - When $k \equiv 1,2 \pmod 4$}: 
In this case, $k$ cannot be of the form $3 \cdot 2^i$ or $3 \cdot 2^i - 1$, where $i$ is an integer greater than or equal to $2$. Hence, the identity can be established by straightforward substitution of the appropriate terms from eq$^n$(\ref{eq:xn}). We provide the detailed calculations below. 

\medskip

\begin{itemize}
    \item If $k \equiv 1 \pmod 4$, then
\begin{align*}
    x_{k} + x_{k - 1} - x_{k - 2} & = \left(k - 1 + x_{\frac{k + 1}{2}} + x_{\frac{k - 3}{2}}\right) + \left(k - 2 + x_{\frac{k - 1}{2}} + x_{\frac{k - 3}{2}}\right)\\
    & \hspace{6cm} - \left(k - 3 + 2x_{\frac{k - 3}{2}}\right)\\ & = k + x_{\frac{k + 1}{2}} + x_{\frac{k - 1}{2}}\\
    & = x_{k + 1}.
\end{align*}

\item If $k \equiv 2 \pmod 4$, then
\begin{align*}
    x_{k} + x_{k - 1} - x_{k - 2} & = \left(k - 1 + x_{\frac{k}{2}} + x_{\frac{k - 2}{2}}\right) + \left(k - 2 + x_{\frac{k}{2}} + x_{\frac{k - 4}{2}}\right)\\ 
    & \hspace{6cm} - \left(k - 3 + x_{\frac{k - 2}{2}} + x_{\frac{k - 4}{2}}\right)\\
    & = k + 2 x_{\frac{k}{2}}\\
    & = x_{k + 1}.
\end{align*}
\end{itemize}

\medskip
\noindent \textit{Case 2 - When $k \equiv 0,3 \pmod 4$}: 
In this case, we need to take into account those $k$'s which are of the forms $3 \cdot 2^i$ or $3 \cdot 2^i - 1$. The key observation in this case is that when $k \geq 6$, the integer $k$ is of the form $3\cdot2^i - 1$ if and only if $(k - 1)/2$ is of the same form. So the identity for both indices $k$ and $(k - 1)/2$ has the $+1$ term, and the substitution of appropriate terms in the eq$^n$(\ref{eq:xn}) gives the required expression. Similarly, when $k \geq 6$, the integer $k$ is of the form $3\cdot2^i$ if and only if $k/2$ is of the same form. So the identity for both indices $k$ and $k/2$ has the $+1$ term, and the substitution of appropriate terms in eq$^n$(\ref{eq:xn}) 
gives the required expression. We provide the detailed 
calculations below. The term $\epsilon$, used below for convenience, takes the value $1$ if $k$ is of the form $3 \cdot 2^i$ or $3 \cdot 2^i - 1$, and $0$ otherwise.

\medskip
\begin{itemize}
    \item If $k \equiv 3 \pmod 4$, then
    \begin{align*}
    x_{k} + x_{k - 1} - x_{k - 2} + \epsilon & = \left(k - 1 + 2 x_{\frac{k - 1}{2}}\right) + \left(k - 2 + x_{\frac{k - 1}{2}} + x_{\frac{k - 3}{2}}\right)\\
    & \hspace{5cm} - \left(k - 3 + x_{\frac{k - 1}{2}} + x_{\frac{k - 5}{2}}\right) + \epsilon\\
    & = k + \left(x_{\frac{k - 1}{2}} + x_{\frac{k - 3}{2}} - x_{\frac{k - 5}{2}}\right) + x_{\frac{k - 1}{2}} + \epsilon\\
    & = k + x_{\frac{k + 1}{2}} - \epsilon + x_{\frac{k - 1}{2}} + \epsilon\\
    & = x_{k + 1}.
\end{align*}
    
    \item If $k \equiv 0 \pmod 4$, then
\begin{align*}
    x_{k} + x_{k - 1} - x_{k - 2} + \epsilon & = \left(k - 1 + x_{\frac{k}{2}} + x_{\frac{k - 2}{2}}\right) + \left(k - 2 + 2 x_{\frac{k - 2}{2}}\right)\\
    & \hspace{5cm} - \left(k - 3 + x_{\frac{k - 2}{2}} + x_{\frac{k - 4}{2}}\right) + \epsilon\\ & = k + \left(x_{\frac{k}{2}} + x_{\frac{k - 2}{2}} - x_{\frac{k - 4}{2}} \right) + x_{\frac{k - 2}{2}} + \epsilon\\ & = k + x_{\frac{k + 2}{2}} - \epsilon + x_{\frac{k - 2}{2}} + \epsilon\\
    & = x_{k + 1}.
\end{align*}  
\end{itemize}

\noindent

The above conclusion proves the induction step, and thus completes the proof of the lemma. 
\end{proof}

Having seen one property of the terms of the sequence $((x_m))$ let us motivate the results in this section. Recall that the construction of the absolute oriented clique $\overrightarrow{O}_n$ requires two smaller absolute oriented cliques of orders $n_1$ and $n_2$ satisfying $n_1 + n_2 = n - 1$ such that $|A(\overrightarrow{O}_{n_1})| + |A(\overrightarrow{O}_{n_2})|$ is minimum. It is not clear by the definition, for what values of $n_1$ and $n_2$ this minimum is attained. We showed that the initial values ($n \leq 5$) of $x_n$ matches with $|A(\overrightarrow{O}_n)|$. To eventually show that $|A(\overrightarrow{O}_n)| = x_n$, we use the precise values of $x_n$ that are known (from eq$^n$(\ref{eq:xn})), to determine the values of possible $n_1$'s and $n_2$'s that minimizes the sum $x_{n_1} + x_{n_2}$. In fact, we claim that the optimal values of $n_1$ and $n_2$ is roughly around $\frac{n}{2}$. To show this, we divide our arguments into two cases: $n$ being even and $n$ being odd. For each of these cases, we prove some more properties of the sequence $((x_m))$ which will support our claim.

\subsubsection{$n = 2m$ is even}
Suppose $n$ is even and of the form $n = 2m$. The unordered pairs $\{1, 2m -2\}, \{2, 2m - 3\}, \cdots,\{m - 2, m + 1\}$ and $\{m - 1, m\}$ give us possible choices for the values of $n_1$ and $n_2$ such that $n_1 + n_2 = n - 1 = 2m - 1$. We compare the values of $x_{n_1} + x_{n_2}$ for each of these pairs, and determine for which values, the sum is minimized. For example, by rearranging the terms in the equality given in Lemma~\ref{lem:x_nprop1}, we get the following inequality as a direct corollary, which tells us that the choice of $\{n_1, n_2\} = \{m - 1, m\}$ is better than $\{n_1, n_2\} = \{m - 2, m + 1\}$.

\begin{corollary}
\label{cor:x_nprop1}
For $m \geq 3$, we have $x_{m} + x_{m - 1} \leq x_{m + 1} + x_{m - 2}$.
\end{corollary}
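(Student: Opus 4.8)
The plan is to read the inequality straight off Lemma~\ref{lem:x_nprop1}, with essentially no additional work. That lemma says that for every $n \ge 4$ we may write $x_n = (x_{n-1} + x_{n-2} - x_{n-3}) + \varepsilon_n$, where $\varepsilon_n = 1$ if $n = 3\cdot 2^i$ or $n = 3\cdot 2^i + 1$ for some integer $i \ge 1$, and $\varepsilon_n = 0$ otherwise. In particular $\varepsilon_n \ge 0$ in every case, so rearranging gives $x_n + x_{n-3} = x_{n-1} + x_{n-2} + \varepsilon_n \ge x_{n-1} + x_{n-2}$, which is precisely the assertion. Thus the corollary amounts to the remark that the correction term appearing in Lemma~\ref{lem:x_nprop1} is nonnegative.

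Since this is a one-line deduction, there is no substantive obstacle to overcome. If one wanted an argument independent of Lemma~\ref{lem:x_nprop1}, I would instead mimic the case analysis used to prove it: split on the residue of $n$ modulo $4$, expand $x_n$, $x_{n-1}$, $x_{n-2}$, $x_{n-3}$ using the recurrence~\eqref{eq:xn} (invoking an induction hypothesis in the two residue classes where a halved index again has the special form $3\cdot 2^i$ or $3\cdot 2^i + 1$), cancel the linear terms, and check that the surviving combination of smaller-index $x$-values is nonnegative. But this merely reproves Lemma~\ref{lem:x_nprop1}, so quoting it is the cleaner route.

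The only point needing a moment's care is the pair of small values $n = 4$ and $n = 5$, where the recurrence~\eqref{eq:xn} does not directly apply; these are still covered by Lemma~\ref{lem:x_nprop1}, and in any case are immediate by hand, as $x_4 + x_1 = 4 \ge 3 = x_3 + x_2$ and $x_5 + x_2 = 6 = x_4 + x_3$. Hence I expect the final proof to be a single sentence invoking Lemma~\ref{lem:x_nprop1}.
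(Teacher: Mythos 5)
Your proposal is correct and matches the paper's (implicit) reasoning exactly: the corollary is stated without proof precisely because it follows immediately from Lemma~\ref{lem:x_nprop1} by observing that the correction term is $0$ or $1$, hence nonnegative.
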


In the following lemma, we complete the discussion for even $n$ by establishing that the choice of $\{n_1, n_2\} = \{m - 1, m\}$ is, in fact, the best possible among all the options for $n_1$ and $n_2$.

\begin{lemma}
\label{lem:x_nprop2}
    For $m \geq 2$, $x_{m - i} + x_{m - 1 + i} \leq x_{m + i} + x_{m - 1 - i}$, where $i \in [0, m - 2]$. Moreover, $x_{m} + x_{m - 1} \leq x_{m + i} + x_{m - 1 - i}$, where $i \in [0, m - 2]$.
\end{lemma}
\begin{proof}
    Fix an $m \geq 2$. To prove the first part of the Lemma, we use strong induction on the index $i$, that is suppose that the identity stated above is true for all $i \in \{0, 1, \cdots, j - 1\}$. To prove the base case, we verify the statement for $i = 0$ and $1$, since substituting $i = 0$, gives us a trivial result. By substituting $i = 1$ in the identity, we get $x_{m - 1} + x_{m} \leq x_{m + 1} + x_{m - 2}$ which is true due to Corollary~\ref{cor:x_nprop1}. This proves the base case.
    
    Next, we are going to prove the induction step, that is, assuming the identity from the statement is true for all $i \leq j - 1$, we will prove that the identity holds for $i = j$. To do this, we start with $x_{m + j} + x_{m - 1 - j}$, and apply Corollary~\ref{cor:x_nprop1} so that the terms obtained can be grouped together. After that we will apply the induction hypothesis to get the required proof. We provide the detailed calculations below.
    
    We replace $m$ with $(m - 1 + j)$ and $(m + 1 - j)$, respectively, in Corollary~\ref{cor:x_nprop1} and make some rearrangements to get
\begin{align*}
    x_{m + j} + x_{m - 1 - j} & \geq (x_{m - 1 + j} + x_{m - 2 + j} - x_{m - 3 + j}) + (x_{m + 1 - j} + x_{m - j} - x_{m + 2 - j})\\ & = (x_{m - 1 + j} + x_{m - j}) + (x_{m - 2 + j} + x_{m + 1 - j}) - (x_{m - 3 + j} +  x_{m + 2 - j}).
\end{align*}
But from the induction hypothesis, when $i = j - 2$ $$x_{m - 3 + j} + x_{m + 2 - j} \leq x_{m - 2 + j} + x_{m + 1 - j}.$$ 
Hence, by combining the above two inequalities we have 
\begin{align*}
    x_{m + j} + x_{m - 1 - j} & \geq (x_{m - 1 + j} + x_{m - j}) + (x_{m - 3 + j} +  x_{m + 2 - j}) - (x_{m - 3 + j} +  x_{m + 2 - j})\\
    & = x_{m - 1 + j} + x_{m - j},
\end{align*}
which completes the proof of the induction step. 

The second half of the lemma can be proved by repeatedly applying the identity that we have just proved. The specific calculations are as follows.
\begin{align*}
    x_{m + i} + x_{m - 1 - i} \geq x_{m - 1 + i} + x_{m - i} & = x_{m + (i - 1)} + x_{m - 1 - (i - 1)} \\
    & \geq x_{m + (i - 2)} + x_{m - 1 - (i - 2)}\\
    & \cdot\\
    & \cdot\\
    & \cdot\\
    & \geq x_{m + (1)} + x_{m - 1 - (1)}\\
    & \geq x_{m} + x_{m - 1}.
\end{align*}
This completes the proof of the lemma.
\end{proof}

\noindent \textbf{Remark:} Lemma~\ref{lem:x_nprop2} implies that the value of $x_{n_1} + x_{n_2}$ increases with the difference between $n_1$ and $n_2$, when $n_1 + n_2 = n - 1$. This supports our claim that the best choice for $n_1$ and $n_2$ is roughly around $\frac{n}{2}$, where the difference between $n_1$ and $n_2$ is the minimum. The second half of Lemma~\ref{lem:x_nprop2} proves our claim.

\subsubsection{$n = 2m + 1$ is odd}
Suppose $n$ is odd and of the form $n = 2m + 1$. The unordered pairs $\{1, 2m -1\}, \{2, 2m - 2\}, \cdots, \{m - 1, m+ 1\}$ and $\{m, m\}$ give us possible choices for the values of $n_1$ and $n_2$ such that $n_1 + n_2 = n - 1 = 2m$. Just as before, we compare the values of $x_{n_1} + x_{n_2}$ for each of these pairs, and determine for which values, the sum is minimized. 

The following lemma establishes the fact that when $n$ is odd, among all the choices for $n_1$ and $n_2$, the best choice is either $\{n_1, n_2\} = \{m - 1, m + 1\}$ or $\{n_1, n_2\} = \{m, m\}$.
\begin{lemma}
\label{lem:x_nprop3}
For $m \geq 2$, $x_{m + i - 1} + x_{m - i + 1} \leq x_{m + i + 1} + x_{m - i - 1}$, where $i \in [0, m - 2]$. Moreover,
    $$x_{m + i + 1} + x_{m - i - 1} \geq 
        \begin{cases}
            x_{m - 1} + x_{m + 1}, & i \text{ is even,}\\
            2x_{m}, & i \text{ is odd,}
        \end{cases}$$
    where $i \in [-1, m - 2]$.
\end{lemma}
\begin{proof}
    Fix an $m \geq 2$. To prove the first part of the lemma, we use strong induction on the index $i$, that is, suppose that the identity stated above is true for all $i \in \{0, 1, 2, \ldots, j - 1\}$. By substituting $i = 0$ in the identity, we get a trivial result which proves the base case. 
    
    Next, we are going to prove the induction step, that is, assuming the identity from the statement is true for all $i \leq j - 1$, we will prove that the identity holds for $i = j$. To do this, we start with $x_{m + j + 1} + x_{m - j - 1}$, and apply Corollary~\ref{cor:x_nprop1} so that the terms obtained can be grouped together. After that, we will apply the induction hypothesis to get the required proof. We provide the detailed calculations below.

    Replace $m$ by $m + j$ and $m$ by $m - j + 1$ respectively in Corollary~\ref{cor:x_nprop1} to get
    \begin{align*}
        x_{m + j + 1} + x_{m - j - 1} & \geq x_{m + j} + x_{m + j - 1} - x_{m + j - 2} + x_{m - j + 1} + x_{m - j} - x_{m - j + 2}\\
        & = (x_{m + j} + x_{m - j}) + (x_{m + j - 1} + x_{m - j + 1}) - (x_{m + j - 2} + x_{m - j + 2})
    \end{align*}
    By applying the induction hypothesis for $i = j - 1$, we get $$x_{m + j} + x_{m - j} \geq x_{m + j - 2} + x_{m - j + 2}.$$ 
    Hence, we have,
    \begin{align*}
        x_{m + j + 1} + x_{m - j - 1} & \geq (x_{m + j - 2} + x_{m - j + 2}) + (x_{m + j - 1} + x_{m - j + 1}) - (x_{m + j - 2} + x_{m - j + 2})\\
        & = x_{m + j - 1} + x_{m - j + 1},
    \end{align*}
    which completes the proof of the induction step.

    The second half of the lemma can be proved by recursively applying the identity that we have just proved. The calculations vary slightly depending on whether $i$ is even or odd. The specific calculations are as follows. 
    
    \begin{itemize}
        \item When $i$ is even:
        \begin{align*}
            x_{m + i + 1} + x_{m - 1 - i} \geq x_{m + i - 1} + x_{m - i + 1} & = x_{m + 1 + (i - 2)} + x_{m - 1 - (i - 2)} \\
            & \geq x_{m + 1 + (i - 4)} + x_{m - 1 - (i - 4)}\\
            & \cdot\\
            & \cdot\\
            & \cdot\\
            & \geq x_{m + 1 + (i - i)} + x_{m - 1 - (i - i)}\\
            & = x_{m + 1} + x_{m - 1}.
        \end{align*}
        \item When $i$ is odd:
        \begin{align*}
            x_{m + i + 1} + x_{m - 1 - i} \geq x_{m + i - 1} + x_{m - i + 1} & = x_{m + 1 + (i - 2)} + x_{m - 1 - (i - 2)} \\
            & \geq x_{m + 1 + (i - 4)} + x_{m - 1 - (i - 4)}\\
            & \cdot\\
            & \cdot\\
            & \cdot\\
            & \geq x_{m + 1 + (i - \{i + 1\})} + x_{m - 1 - (i - \{i + 1\})}\\
            & = 2 x_{m}.
        \end{align*}
    \end{itemize}
    This completes the proof of the lemma.
\end{proof}

\noindent \textbf{Remark:} Just as before, Lemma~\ref{lem:x_nprop3} implies that the value of $x_{n_1} + x_{n_2}$ increases with the difference between $n_1$ and $n_2$, when $n_1 + n_2 = n - 1$. This proves our claim that the best choice for $n_1$ and $n_2$ is roughly around $\frac{n}{2}$, where the difference between $n_1$ and $n_2$ is the minimum. 

The following two lemmas helps us to complete our understanding of the case when $n$ is odd by giving us conditions when $\{n_1, n_2\} = \{m - 1, m + 1\}$ is and when $\{n_1, n_2\} = \{m, m\}$ is a better choice for $n_1$ and $n_2$.

\begin{lemma}
\label{lem:x_nprop41}
    If $m \geq 2$ is even, then $2x_m \geq x_{m - 1} + x_{m + 1}$. 
\end{lemma}
\begin{proof}
    Let $m$ be even. The proof is by strong induction on $m$, that is, suppose that the identity stated above is true for all $m \in \{2, 4, \cdots, k - 2\}$. To prove the base case, it is enough to observe that when $m = 2$, $$x_1 + x_3 = 0 + 2 = 2 = 2x_2.$$ 
    
    Next, we are going to prove the induction step, that is, assuming the identity from the statement is true for all $m \leq k - 2$, we will prove that the identity holds for $m = k$. To do this, we will consider $x_{k - 1} + x_{k + 1}$ and substitute from eq$^n$(\ref{eq:xn}). Since eq$^n$(\ref{eq:xn}) depends on the residue modulo 4 value of $k$, we get two different substitutions depending on the value of $k$. In both cases, next we apply the induction hypothesis to simplify the terms and get the required result. We give detailed calculations below.

    \begin{itemize}
        \item If $k \equiv 0 \pmod 4$, then $$x_{k + 1} + x_{k - 1} = \left(k + x_{\frac{k + 2}{2}} + x_{\frac{k - 2}{2}}\right) + \left(k - 2 + 2 x_{\frac{k - 2}{2}}\right).$$ But since, $\frac{k}{2}$ is also even, the induction hypothesis for $m = \frac{k}{2}$ gives us that $$2x_{\frac{k}{2}} \geq x_{\frac{k}{2} - 1} + x_{\frac{k}{2} + 1} = x_{\frac{k - 2}{2}} + x_{\frac{k + 2}{2}},$$ and upon substituting and rearranging, we have 
    \begin{align*}
        x_{k + 1} + x_{k - 1} & \leq \left(k - 1 + x_{\frac{k - 2}{2}} + x_{\frac{k}{2}}\right) + \left(k - 1 + x_{\frac{k - 2}{2}} + x_{\frac{k}{2}}\right)\\ & = 2 x_k.
    \end{align*} 

        \item If $k \equiv 2 \pmod 4$, then $$x_{k + 1} + x_{k - 1} = \left(k + 2 x_{\frac{k}{2}}\right) + \left(k - 2 + x_{\frac{k}{2}} + x_{\frac{k - 4}{2}}\right).$$ But since, $\frac{k - 2}{2}$ is also even, the induction hypothesis for $m = \frac{k - 2}{2}$ gives us that $$2x_{\frac{k - 2}{2}} \geq x_{\frac{k - 2}{2} - 1} + x_{\frac{k - 2}{2} + 1} = x_{\frac{k - 4}{2}} + x_{\frac{k}{2}},$$ and upon substituting and rearranging, we have 
    \begin{align*}
        x_{k + 1} + x_{k - 1} & \leq \left(k - 1 + x_{\frac{k - 2}{2}} + x_{\frac{k}{2}}\right) + \left(k - 1 + x_{\frac{k - 2}{2}} + x_{\frac{k}{2}}\right)\\ & = 2 x_k.
    \end{align*}
    \end{itemize}
    This completes the proof of the lemma.
\end{proof}

\begin{lemma}
\label{lem:x_nprop42}
    If $m \geq 3$ is odd, then $2x_m \leq x_{m - 1} + x_{m + 1}$.
\end{lemma}
\begin{proof}
    Let $m$ be odd. We proceed similarly as above. The proof is by strong induction on the index $m$, that is, suppose that the identity stated above is true for all $m \in \{3, \cdots, k - 2\}$. To prove the base case, it is enough to observe that when $m = 3$, $$x_2 + x_4 = 1 + 4 = 5 \geq 4 = 2x_3.$$ 
    
     Next, we are going to prove the induction step, that is, assuming the identity from the statement is true for all $m \leq k - 2$, we will prove that the identity holds for $m = k$. To do this, we will consider $x_{k - 1} + x_{k + 1}$ and substitute from eq$^n$(\ref{eq:xn}). Since eq$^n$(\ref{eq:xn}) depends on the residue modulo 4 value of $k$, we get two different substitutions depending on the value of $k$. In both cases, depending on whether $\frac{k - 1}{2}$ is even or odd, we make appropriate substitutions to simplify the terms and get the required result. We give detailed calculations below.

    \begin{itemize}
        \item If $k \equiv 1 \pmod 4$, then $$x_{k + 1} + x_{k - 1} = \left(k + x_{\frac{k + 1}{2}} + x_{\frac{k - 1}{2}}\right) + \left(k - 2 + x_{\frac{k - 1}{2}} + x_{\frac{k - 3}{2}}\right).$$ Observe that $\frac{k - 1}{2}$ is even. Hence, by Lemma~\ref{lem:x_nprop41}, we have, $$2x_{\frac{k - 1}{2}} \geq x_{\frac{k - 1}{2} - 1} + x_{\frac{k - 1}{2} + 1} = x_{\frac{k - 3}{2}} + x_{\frac{k + 1}{2}}.$$ Therefore, by rearranging and grouping the terms, we get,
    \begin{align*}
        x_{k + 1} + x_{k - 1} & \geq \left(k - 1 + x_{\frac{k + 1}{2}} + x_{\frac{k - 3}{2}}\right) + \left(k - 1 + x_{\frac{k + 1}{2}} + x_{\frac{k - 3}{2}}\right)\\ & = 2 x_k.
    \end{align*}

        \item If $k \equiv 3 \pmod 4$, then $$x_{k + 1} + x_{k - 1} = \left(k + x_{\frac{k + 1}{2}} + x_{\frac{k - 1}{2}}\right) + \left(k - 2 + x_{\frac{k - 1}{2}} + x_{\frac{k - 3}{2}}\right).$$ But since, $\frac{k - 1}{2}$ is also odd, the induction hypothesis for $m = \frac{k - 1}{2}$ gives us $$2x_{\frac{k - 1}{2}} \leq x_{\frac{k - 1}{2} - 1} + x_{\frac{k - 1}{2} + 1} = x_{\frac{k - 3}{2}} + x_{\frac{k + 1}{2}},$$ and we have 
    \begin{align*}
        x_{k + 1} + x_{k - 1} & \geq \left(k - 1 + 2 x_{\frac{k - 1}{2}}\right) + \left(k - 1 + 2 x_{\frac{k - 1}{2}}\right)\\ & = 2 x_k.
    \end{align*}
    \end{itemize}
    This completes the proof of the induction step and hence, completes the proof of the lemma.
\end{proof}

\subsection{Concluding the proof of Theorem~\ref{thm:upper}}

\noindent \textit{Proof of Theorem~\ref{thm:upper}:} Now we finally have all the tools necessary to prove our result. As mentioned before, we prove this theorem by proving that for all $n \geq 1$, $|A(\overrightarrow{O}_n)| = x_n$.

    The proof is by strong induction on $n$, that is, suppose that the statement is true for all $n \in \{1, 2, \cdots, m - 1\}$. To prove the base case, we verify the statement for $n \leq 5$. This is true since we have already shown that the initial values $(n \leq 5)$ of $x_n$ match with the initial values of $|A(\overrightarrow{O}_n)|$. 
    
    Next, we are going to prove the induction step, that is, assuming the statement is true for all $n \leq m - 1$, we will prove that the statement holds for $n = m$. To do this, depending on the residue modulo 4 value of $m$, we will use the lemmas that we have proved to obtain the values of $n_1$ and $n_2$ through which the sum $x_{n_1} + x_{n_2}$ and equivalently $|A(\overrightarrow{O}_{n_1})| + |A(\overrightarrow{O}_{n_2})|$, is minimized. The details are listed below. For indices lesser than or equal to $m - 1$, the induction hypothesis allows us to use $|A(\overrightarrow{O}_m)|$ and $x_m$ interchangeably.

    \begin{itemize}
        \item When $m$ is odd: By Lemma~\ref{lem:x_nprop3}, $|A(\overrightarrow{O}_{n_1})| + |A(\overrightarrow{O}_{n_2})|$ is minimized either when $\{n_1, n_2\} = \{\frac{m - 1}{2} - 1, \frac{m - 1}{2} + 1\}$ or when $\{n_1, n_2\} = \{\frac{m - 1}{2}, \frac{m - 1}{2}\}$ depending on whether $\frac{m - 1}{2}$ is even or odd.
        \begin{itemize}
            \item When $m \equiv 1 \pmod 4$, $\frac{m - 1}{2}$ is even, hence, by Lemma~\ref{lem:x_nprop41}, \begin{align*}
                |A(\overrightarrow{O}_m)| & = m - 1 + |A(\overrightarrow{O}_{\frac{m - 3}{2}})| + |A(\overrightarrow{O}_{\frac{m + 1}{2}})|\\
                & = x_m.
            \end{align*}
            \item When $m \equiv 3 \pmod 4$, $\frac{m - 1}{2}$ is odd, hence, by Lemma~\ref{lem:x_nprop42}, \begin{align*}
                |A(\overrightarrow{O}_m)| & = m - 1 + 2|A(\overrightarrow{O}_{\frac{m - 1}{2}})|\\
                & = x_m.
            \end{align*}
        \end{itemize}
        \item When $m$ is even: By Lemma~\ref{lem:x_nprop2}, $|A(\overrightarrow{O}_{n_1})| + |A(\overrightarrow{O}_{n_2})|$ is minimized when $\{n_1, n_2\} = \{\frac{m}{2} - 1, \frac{m}{2}\}$. Hence, we have 
        \begin{align*}
                |A(\overrightarrow{O}_m)| & = m - 1 + |A(\overrightarrow{O}_{\frac{m - 2}{2}})| + |A(\overrightarrow{O}_{\frac{m}{2}})|\\
                & = x_m.
            \end{align*}
    \end{itemize}
    This completes the proof of the theorem.\qed

\section{Applications to the oriented achromatic number}
\label{sec:sec5}
Let $\overrightarrow{G}$ and $\overrightarrow{H}$ be oriented graphs. A surjective homomorphism $h: V(\overrightarrow{G}) \rightarrow V(\overrightarrow{H})$ is \textit{complete} if and only if for every arc $\overrightarrow{ab} \in A(\overrightarrow{H})$, there exists an arc $\overrightarrow{uv} \in A(\overrightarrow{G})$ such that $h(u) = a$ and $h(v) = b$. The \textit{oriented achromatic number} of an oriented graph $\overrightarrow{G}$, denoted $\psi_o(\overrightarrow{G})$, is the largest order of an absolute oriented clique $\overrightarrow{O}_n$ such that there is a complete homomorphism $h: V(\overrightarrow{G}) \rightarrow V(\overrightarrow{O}_n)$. The oriented achromatic number was first defined by Sopena~\cite{sopena2014complete}. 

By the definition, it is easy to see that the results on absolute oriented cliques will directly effect results on the oriented achromatic number. As an illustration, we give a few results that are improvements of results already proved in~\cite{PD202448}. An interested reader can refer to \cite{sopena2014complete} and \cite{PD202448} for the original results and conventions used. The sequence $((x_m))$ is as defined in eq$^n$(~\ref{eq:xn}).

Let $v_{odd}(G)$ be the number of vertices with odd degree in $G$. The following result is an improvement of Theorem 2.6 in \cite{PD202448}.
\begin{theorem}
\label{thm:path}
    For every integer $m \geq 2$ and every $n \geq \frac{k}{2} + x_m$, $\psi_o(P_n) \geq m$, where $$k = \begin{cases}
    v_{odd}(O_m) - 2, & \text{ if } m \neq 5\\
    v_{odd}(O_m), & \text{ if } m = 5. \end{cases}$$
\end{theorem}

Similarly, for cycles, we can improve Theorem 2.8 in \cite{PD202448} to get the following.
\begin{theorem}
    For every $m \geq 2$ and every $n \geq \frac{v_{odd}(O_m)}{2} + x_m$, $\psi_o(C_{n}) \geq m$.
\end{theorem}

The Cartesian product $G \boxempty H$ of graphs $G$ and $H$ is the graph with vertex set the Cartesian product $V(G) \times V(H)$, and, two vertices $(u, u')$ and $(v, v')$ are adjacent in $G \boxempty H$ if and only if either (a) $u = v$ and $u'$ is adjacent to $v'$ in $H$, or (b) $u' = v'$ and $u$ is adjacent to $v$ in $G$.

The next few results talk about the oriented achromatic number of cartesian products of certain types of graphs. We improve Theorems 3.11, 3.12 and 3.13 in \cite{PD202448} to get the following.

\begin{theorem}
\label{thm:cartpath}
    Let $n \geq 1$ be an integer. Then $\psi_o(P_n \boxempty K_2) < m$ if $3n + 1 < x_m$.
\end{theorem}

\begin{theorem}
\label{thm:cartcyc}
     Let $n \geq 3$ be an integer. Then $\psi_o(C_n \boxempty K_2) < m$ if $3n < x_m$.
\end{theorem}

\begin{theorem}
\label{thm:cartcycgen}
     Let $n \geq 3$ be an integer. Then $\psi_o(C_n \boxempty C_\ell) < m$ if $2n\ell < x_m$.
\end{theorem}

The next result talks about the oriented achromatic number of regular graphs, and is an improvement of Theorem 3.14 in \cite{PD202448}.

\begin{theorem}
\label{thm:cartreg}
     Let $G$ be an $r$-regular graph of order $n$ with $r \geq 1$. Then $\psi_o(G) < m$ if  $nr < 2x_m$.
\end{theorem}

The proofs of these results work exactly the same as in \cite{PD202448}. While we use $x_m$ as the upper bound on the number of arcs in an absolute oriented clique, \cite{PD202448} uses another upper bound which they obtained using a different construction.

\section{Conclusions}
\label{sec:sec6}
\noindent (1) \textbf{Regarding the upper bounds of $f_2(n)$:}
To derive their upper bound, F\"{u}redi, Horak, Pareek and Zhu~\cite{furedi1998minimal} used a recursive construction of absolute oriented cliques quite similar to what we have used. The difference between their construction and ours is that for the construction of each absolute oriented clique $\overrightarrow{O}_n = \overrightarrow{O}_{n_1} \ltimes \overrightarrow{O}_{n_2}$, we ensure that $|A(\overrightarrow{O}_{n_1})| + |A(\overrightarrow{O}_{n_2})|$ is minimized. This ensures that the upper bound that we obtain is better than the one obtained in \cite{furedi1998minimal}.

It is quite easy to calculate the values of the sequence $((x_n))$ for any $n$ using a computer program. In the graph in Figure~\ref{gr:comp}, we plot the number of vertices $n$ on the $x$-axis versus the upper bound on the minimum number of arcs in an absolute oriented clique on $n$ vertices as given by Kostochka, Luczak, Simonyi and Sopena~\cite{kostochka1999minimum}(KLSS), F\"{u}redi, Horak, Pareek and Zhu~\cite{furedi1998minimal}(FHPZ) and our result ($x_n$). This graph, together with some specific values mentioned in Table~\ref{tab:comp}, give a more visual representation of how our upper bound is the best one yet.

 \begin{figure}
     \centering
     \begin{tikzpicture}
\begin{axis}[
    xmin = 0, xmax = 1000,
    ymin = 0, ymax = 10000,
    grid = both,
    minor tick num = 1,
    major grid style = {lightgray},
    minor grid style = {lightgray!25},
    width = \textwidth,
    height = 0.7\textwidth,
    legend cell align = {left},
    legend pos = north west]
    \addplot[
        domain = 0:1000,
        samples = 1000,
        smooth, 
        thick,
        blue,
    ] {ceil(log2(x))*(x - ceil(log2(x)))};
    \addplot[
        domain = 0:1000,
        samples = 1000,
        smooth,
        thick,
        red,
    ] {floor(x*log2(x) - 3*x/2)};
    \addplot[
        smooth,
        thin,
        black,
    ] file[skip first] {Book1.prn};
    \legend{\textcolor{blue}{KLSS}, \textcolor{red}{FHPZ}, $x_n$}
\end{axis}
\end{tikzpicture}
     \caption{Comparision of $f_2(n)$ upper bounds as obtained by the various results mentioned in this paper}
     \label{gr:comp}
 \end{figure}

\begin{table}[h!]
\centering
\begin{tabular}{||c | c | c| c||} 
 \hline
$n$ & KLSS & FHPZ & \textbf{$x_n$}\\ [0.5ex] 
\hline
$10$ & 24 & 18 & \textbf{18}\\
\hline
$10^2$ & 651 & 514 & \textbf{467}\\
\hline
$10^3$ & 9900 & 8465 & \textbf{7976}\\
\hline
$10^4$ & 139804 & 117877 & \textbf{112727}\\
\hline
$10^5$ & 1699711 & 1510964 & \textbf{1453411}\\
\hline
$10^6$ & 19999600 & 18431568 & \textbf{17927158}\\[0.5ex]
\hline
\end{tabular}
\caption{Comparision of known upper bounds of $f_2(n)$ for some $n$.}
\label{tab:comp}
\end{table}

\medskip

\noindent (2) \textbf{Exact value of $f_2(n)$:}
We wrote a program that checked the validity of Conjecture~\ref{conj:main} for small number of vertices. We were able to confirm the conjecture for $n \leq 12$. For $n = 12$, the program checked more than $1.6 \times 10^{11}$ oriented graphs. Perhaps unsurprisingly, for $n \geq 13$, the number of graphs to be checked increases drastically and it was no longer feasible to check it. It would be an interesting problem to identify structural properties of absolute oriented cliques that could help us to significantly reduce the number of cases that need to be checked by a program.

Based on our observations in this work, we make the following conjecture.

\begin{conjecture}
\label{conj:sub1}
    For all $n > 5$, there exists an oriented clique with $n$ vertices and $f_2(n)$ arcs such that it has a cut vertex.
\end{conjecture}

Observe that Conjecture~\ref{conj:sub1}, together with Theorem~\ref{thm:upper} implies Conjecture~\ref{conj:main}. We also conjecture a weaker version of Conjecture~\ref{conj:sub1} as follows.

\begin{conjecture}
\label{conj:sub2}
    For all $n > 5$, there exists an oriented clique with $n$ vertices and $f_2(n)$ arcs such that it has a vertex of degree $n - 1$.
\end{conjecture}

We believe that Conjecture~\ref{conj:sub2} might be easier to prove than Conjecture~\ref{conj:sub1}. It is interesting to see if proving Conjecture~\ref{conj:sub2} would imply Conjecture~\ref{conj:sub1}.

\medskip

\noindent (3) \textbf{Colored mixed graphs:}
A broad generalization of oriented graphs and their homomorphisms, introduced by Ne\v{s}et\v{r}il and Raspaud~\cite{nevsetvril2000colored} is the notion of colored homomorphisms of colored mixed graphs. 
To elaborate, colored mixed graphs, or $(n,m)$-graphs are graphs having $n$ different types of arcs and $m$ different types of edges. Notice that, a (simple) $(n,m)$-graph for $(n,m) = (1,0)$ is nothing but an oriented graph.  The concept analogous to absolute oriented clique is present in the literature~\cite{bensmail2017analogues} for $(n,m)$-graphs as well. Thus it is a natural future research direction to extend the notion of 
$f_2(n)$ in the setup of  $(n,m)$-graphs and study its behavior. Moreover, one may also consider studying the natural extension of the concept of oriented achromatic number for $(n,m)$-graphs.


\bibliographystyle{abbrv}
\bibliography{references}

\end{document}